\pgfplotsset{compat=1.8}
\pgfplotsset{compat=1.8}
\newtheorem{thm}{Theorem}
\newtheorem{proposition}{Proposition}
\theoremstyle{definition}
\newtheorem{remark}{Remark}
\def\x{\mathbf x}
\def\xn{\mathbf x_0}
\definecolor{ao(english)}{rgb}{0.0, 0.5, 0.0}
\title{A reaction network model of microscale liquid-liquid phase separation reveals effects of spatial dimension}
\author{Jinyoung Kim\thanks{Department of Mathematics, Pohang University of Science and Technology (POSTECH), Pohang 37673, Republic of Korea} , Sean D. Lawley\thanks{Department of Mathematics, University of Utah, Salt Lake City, UT 84112 USA}, and Jinsu Kim\thanks{Department of Mathematics, Pohang University of Science and Technology (POSTECH), Pohang 37673, Republic of Korea, jinsukim@postech.ac.kr}}
\date{January 2024}
\begin{document}

\maketitle
\tableofcontents

\abstract{
Proteins can form droplets via liquid-liquid phase separation (LLPS) in cells.  Recent experiments demonstrate that LLPS is qualitatively different on two-dimensional (2d) surfaces compared to three-dimensional (3d) solutions. In this paper, we use mathematical modeling to investigate the causes of the discrepancies between LLPS in 2d versus 3d. We model the number of proteins and droplets inducing LLPS by continuous-time Markov chains and use chemical reaction network theory to analyze the model.
To reflect the influence of space dimension, droplet formation and dissociation rates are determined using the first hitting times of diffusing proteins. We first show that our stochastic model reproduces the appropriate phase diagram and is consistent with the relevant thermodynamic constraints. After further analyzing the model, we find that it predicts that the space dimension induces qualitatively different features of LLPS which are consistent with recent experiments. While it has been claimed that the differences between 2d and 3d LLPS stems mainly from different diffusion coefficients, our analysis is independent of the diffusion coefficients of the proteins since we use the stationary model behavior. Therefore, our results give new hypotheses about how space dimension affects LLPS.}

\section{Introduction}

In a cell, LLPS manifests as the formation of droplets from protein condensation. These liquid droplets in a dense phase separate from their surrounding spaces of dilute phases \cite{alberti2018LLPS, banani2017biomolecular, brangwynne2009germline, hyman2014liquid, shin2017liquid}. Biologists have made significant strides in elucidating the importance of LLPS and its involvement in cellular processes. Repair protein factors, for example, are involved in the initiation of LLPS during DNA double-strand breaks \cite{wang2023Break_DNA, levone2021repair, mine2022repair}. Additionally, it has been found that the condensates of SCOTIN, an Endoplasmic Reticulum (ER) transmembrane protein with a cytosolic intrinsically disordered region, inhibit ER-to-Golgi transport through LLPS \cite{kim2023intrinsically}. 
Furthermore, dysregulation of LLPS has been associated with various diseases, including cancer \cite{Mehta2022Disease, Wang2021Disease, Wegmann2018Tau}.

LLPS is found in many different intracellular locations, including on 2-dimensional (2d) surfaces such as the endoplasmic reticulum membrane  \cite{ snead2019control, snead2022membrane, kim2023intrinsically} or in 3-dimensional (3d) spaces such as the cytoplasm \cite{kedersha2013stress, snead2019control, Lin2015RNA, hirose2023guide} (Figure \ref{fig:overall}A). Qualitative differences between 2-dimensional LLPS (2d-LLPS) and 3-dimensional LLPS (3d-LLPS) have been observed recently 
\cite{Jonathon2021threshold, snead2022membrane, case2022membranes}. For example, Snead et al.\ \cite{snead2022membrane} revealed differences in droplet formation times between 2d and 3d environments. While droplets in 2d can form within minutes, it takes hours for them to form in 3d. Additionally, droplets in 2d can be arrested within minutes, suggesting resistance to size growth, whereas in 3d, they can reach their maximum size within hours.
To explain these qualitative differences observed in 2d versus 3d, some have proposed that droplet size arrest may result from disparities of diffusion coefficients in 2d and 3d environments \cite{snead2022membrane}. For example, diffusion coefficients of eGFP proteins in 2d differ markedly from 3d  \cite{thomas2015Diff_2D, Zareh2012Diff3D_2D}.

While it is true that diffusion coefficients differ in 2d versus 3d cellular environments, the space dimension has more fundamental effects on diffusion processes. For example, far fewer steps are required (on average) for a random walk to find a target if the search is restricted to a 2d lattice rather than a 3d lattice \cite{adam68}. More mathematically sophisticated examples of dimensional discrepancies include the recurrence versus transience of 2d versus 3d random walks \cite{NorrisMC97, Durrett_process} and the logarithmic versus algebraic singularities of the 2d versus 3d Laplacian Green's functions \cite{evans2022partial}. Importantly, these discrepancies cannot be accounted for by merely rescaling time. Studying how these fundamental, dimensional differences in diffusion affect cell biology has a long history in the biophysics literature \cite{saxton2007modeling,nadler1991,abel12,shea97,kholodenko00,haugh02,monine08,alonso10,lawley2016pi,lawley2017pi,dixon2024,grebenkov2022search}. How do these differences affect LLPS?

In this paper, we formulate and analyze mathematical models of microscopic intracellular LLPS and theoretically compare LLPS in 2d versus 3d. The main theoretical framework for this is biochemical reaction network theory. By describing droplet formation, coarsening, and dissociation with reactions between species, LLPS can be associated with a reaction network. Our model consists of the following reactions, 
\begin{align}\label{eq:main original crn}
\begin{split}
   &m P\xrightleftharpoons[b_m]{a_1} D_m, \quad \text{(droplet formation with $m$ copies of proteins and dilution)}\\
   &P+D_k\xrightleftharpoons[b_{k+1}]{a_k} D_{k+1}, \quad \text{for $k=m,\dots,L-1$ (coarsening and dissociation)}\\
   &D_{k}+D_{j}\xrightleftharpoons[g_{k+j}]{f_{kj}} D_{k+j}, \quad \text{for $2m \le k+j\le L$ (fusion and fission)}
   \end{split}
\end{align}
where $P$ denotes a single protein and $D_k$ denotes a droplet consisting of $k$ proteins.  $m$ indicates the threshold number of proteins to form a droplet. We model the reactions in \eqref{eq:main original crn} with a continuous-time Markov chain which tracks the copy numbers of each ``species'' $P$ and $D_i$. 
We derive the closed form of the stationary probability distribution of this Markov chain using biochemical reaction network theory. 
The reaction rate parameters ($a_i, b_i, f_i$, and $g_i$ in \eqref{eq:main crn}) are set via first passage time theory of diffusion processes to reflect spatial dimension differences. We then study how these dimensional disparities in diffusion yield differences for LLPS in relation to the reaction rates by computing the resulting stationary distribution of \eqref{eq:main original crn}. 

The theoretical study of LLPS spans various fields. In physical chemistry, scientists have investigated LLPS phenomena under thermodynamic theory by measuring energy, showing that energy minimization leads to the demixing of substances and liquid state phase separation \cite{bartolucci2024interplay, hyman2014liquid, brangwynne2015polymer, falahati2019thermodynamically}. Additionally, theorists used partial differential equation models, such as the Cahn-Hilliard equation \cite{cahn1958cahnhilliard}, Allen-Cahn equation \cite{allen1979phaseboundary} and Cahn-Hilliard-Navier-Stokes equation \cite{jacqmin1999CahnNavier}, to analyze and numerically simulate LLPS \cite{zhang2008Energy, gasior2020modeling, gasior2019partial, hester2023Energy, yu2021hsp70, girelli2021microscopic, hester2023fluid}. Machine learning and data-driven methods are also employed to analyze phase separation \cite{saar2023theoretical}.
In contrast to previous models that primarily use thermodynamic frameworks like free energy and chemical potentials to explain LLPS, our model is built from first hitting times of diffusing proteins.

We now briefly summarize our results and their biophysical implications. We first verify that our model reproduces the appropriate phase diagram and phase separation and is consistent with existing thermodynamic models. We then study qualitative differences between 2d-LLPS and 3d-LLPS via the stationary distribution of the reaction network \eqref{eq:main original crn}. The shape of the stationary distribution is determined by protein characteristics such as the droplet viscosity, the minimum size of droplets, and the hydrodynamic radius of proteins. We first investigate the effect of the droplet viscosity, indicating the strength of the protein-protein interactions. Within a wide range of the droplet viscosity, 2d-LLPS forms large droplets while proteins are likely to remain without forming in 3d-LLPS. Notably, a higher droplet viscosity is required in 3d than in 2d to increase droplet size. Next, we found that there exists a range of the minimum droplet size in which 2d and 3d LLPS have significantly different probabilities of forming droplets. Finally, we showed that when proteins are tethered on a membrane yielding a reduction on the hydrodynamic radius of the protein, less droplets in 2d can be produced compared to a 3d space, but only if the reduction is significant enough. We display these results using both mathematical analysis and numerical computations.

The stationary distribution thus reveals how diffusion in 2d versus 3d yields differences between 2d-LLPS and 3d-LLPS. Importantly, the stationary distribution is independent of the diffusion coefficient. Therefore, our analysis predicts that prominent qualitative differences between 2d-LLPS and 3d-LLPS stem from fundamental differences in spatial dimension rather than solely from differences in diffusion coefficients. To our knowledge, our study provides the first model of intracellular LLPS using first passage time analysis, chemical reaction network theory, and continuous-time Markov chains.

This manuscript is outlined as follows. We first introduce biochemical reaction networks, one of the key theoretical frameworks of this study, in Section \ref{sec:crn}. In that section, we also derive the closed form stationary distribution of the copy numbers of the proteins and the droplets. In Section \ref{sec:rates}, we use first passage time theory for setting the reaction rates. In Section \ref{sec:main results}, the main results are provided: reproduction of thermodynamic description of LLPS with our model and  the qualitative differences of stationary distributions modeling 2d and 3d LLPS in terms of the viscosity, the threshold droplet size, and the hydrodynamic radius of proteins. In Section \ref{sec:analysis}, we provide mathematical analyses of our main results.

\begin{figure}[htbp]
    \centering
\includegraphics[width=0.8\textwidth]{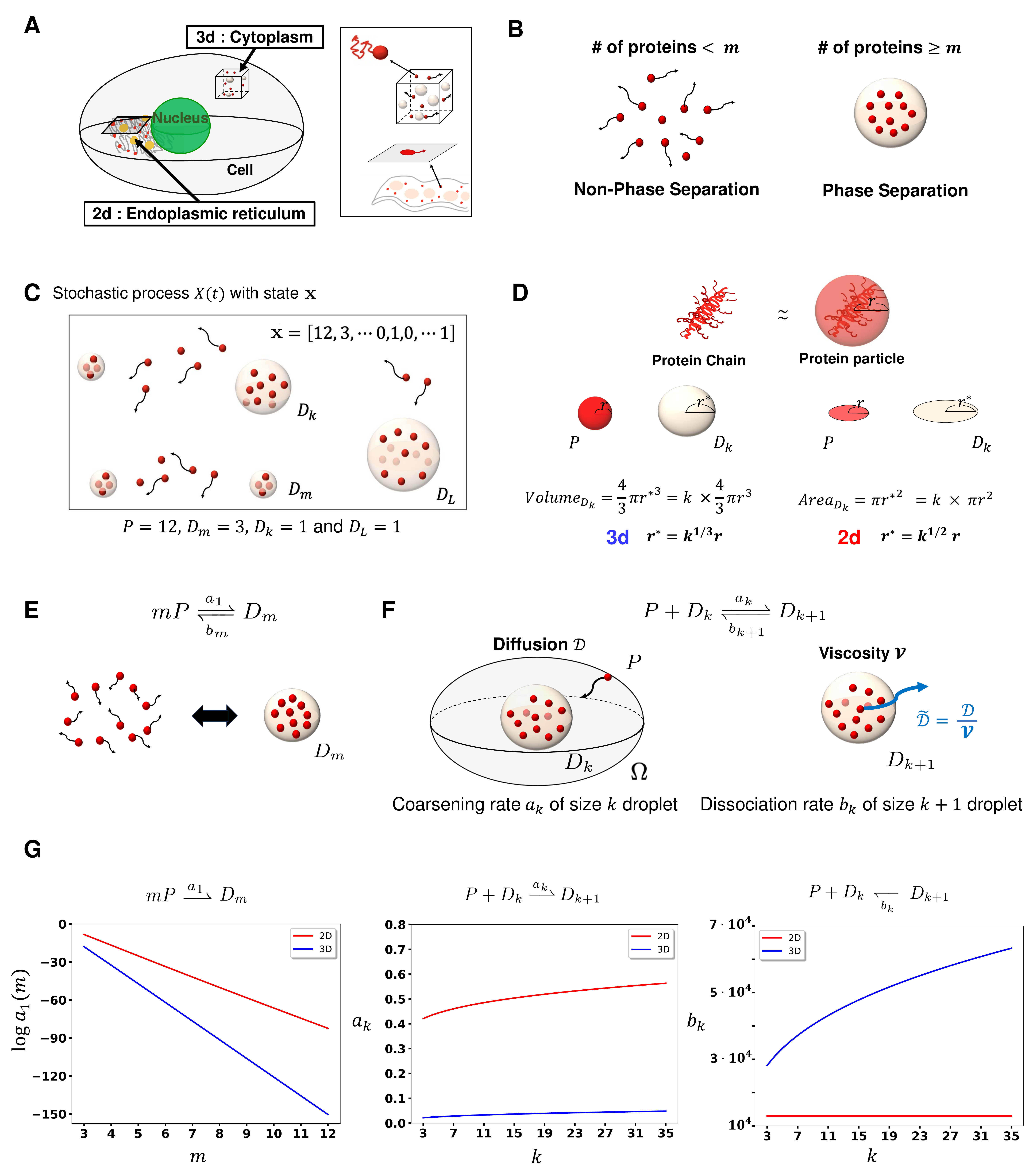} 
\caption{ \textbf{Overview of modeling LLPS using a reaction network.} 
\textbf{A.} LLPS occurrence in 2d (e.g., membrane) and 3d (e.g., cytoplasm). \textbf{B.} The value $m$ represents the threshold number required for droplet formation. \textbf{C.} A visual representation of a state of the Markov chain $X(t)$. \textbf{D.} The hydrodynamic radius $r$ of a protein. A droplet is approximated as a sphere in 3d and a circle in 2d to derive its radius. \textbf{E-F.} Understanding reaction the rates $a_1, a_k$'s and $b_k$'s. The $a_k$ and $b_k$ are defined by mean first hitting and exit times. \textbf{G.} Plots of the rates $a_1, a_k, b_k$ of 2d and 3d under equal diffusion coefficients and viscosity.}
\label{fig:overall}
\end{figure}


\section{Reaction network description of LLPS}\label{sec:crn}

We develop a stochastic process modeling LLPS based on reaction networks to describe LLPS in both 2d and 3d cellular environments. A reaction network is a graph whose nodes and edges represent complexes and reactions, respectively. For example, in \eqref{eq:main crn}, the reaction $P + D_k\to D_{k+1}$ describes coarsening of the droplet of $k$ proteins by recruiting an additional protein $P$. The reactant $P+D_k$ is a complex consisting of a single copy of $P$ and a single copy of $D_k$, and $D_{k+1}$ is the product complex of the reaction.

We use a continuous-time Markov chain to model the stochastic evolution of the copy numbers of species in a reaction network. Specifically, let $X(t)=(P(t),D_m(t),\dots,D_L(t))$ be a continuous-time Markov chain associated with \eqref{eq:main original crn}. Each coordinate of $X(t)$ gives the copy number of the corresponding species at time $t$ (Figure \ref{fig:overall}C). The evolution of $X$ is given by a reaction. For example, if $P+D_m\to D_{m+1}$ fires at $t$, then $X(t)-X(t-)=(-1,-1,1,0,\dots,0)$. The reaction to fire and the time for the next reaction are randomly determined using the reaction intensity $\lambda_{y\to y'}$ for a reaction $y\to y'$ defined as
\begin{align}\label{eq:transition rates }
    P(X(t+\Delta t)=x+\nu_{y\to y'}| X(t)=x) = \lambda_{y\to y'}(x) \Delta t+ o(\Delta t)\quad\text{as }\Delta t\to0+,
\end{align}
where $\nu_{y\to y'}$ is the \emph{reaction vector}  describing the net change of the  reaction $y\to y'$. For example, $\nu_{mP\to D_m}=(-m,1,0,0,\dots,0)$. The function $\lambda_{y\to y'}$ provides the rate of the transition given by the reaction $y\to y'$ \cite{NorrisMC97}. Hence these intensities fully characterize $X$. 
We highlight important assumptions for modeling LLPS.
\begin{itemize}
	\item[(1)] We assume that the timescale of LLPS is faster than protein production and degradation. Hence, we do not consider production and degradation reactions, $P\rightleftharpoons \emptyset$.
	\item[(2)] We do not consider reactions $kP\rightleftharpoons D_k$ for $k<m$ because we assume that there exists a threshold number, $m$, of proteins to form a droplet (Figure \ref{fig:overall}B). The existence of such a threshold was experimentally and theoretically verified in \cite{martin2021multi}.  The volume fraction of the dense phase under the stationary distribution of $X(t)$, which we drive in \eqref{eq:pi}, can be also used to theoretically support this. We provide more details about this setting in Appendix \ref{sub_app:threshold}. 
     \item[(3)] Proteins inside droplets are expected to have smaller mobility compared to proteins outside droplets \cite{kamagata2022guest, kim2023intrinsically}, which means that $f_{kj}$ and $g_{k+j}$ are much smaller than the other reaction rates. Hence, to simplify our analysis, we neglect the fusion reactions and the fission reactions $D_{k}+D_{j}\xrightleftharpoons[g_{k+j}]{f_{kj}} D_{k+j}$ (see Appendix \ref{sub_app:mobility} for details on this assumption).

\end{itemize}
Under these assumptions, the reaction network describing LLPS in this paper is
\begin{align}\label{eq:main crn}
\begin{split}
    m P \xrightleftharpoons[b_m]{a_1} D_m, \quad \quad 
    P+D_k \xrightleftharpoons[b_{k+1}]{a_k} D_{k+1} \ \text{for $k=m,\dots,L-1$}.
\end{split} 
\end{align}

\begin{remark}
The reactions in \eqref{eq:main crn} do not result in chemical changes on the proteins or the droplets, 
although such reactions are often termed as `chemical reactions' in  mathematical biology or chemical reaction network theory. 
For example, the birth of an animal can be described with a chemical reaction $A\to 2A$. In the same sense, we also emphasize that the reactions in \eqref{eq:main crn} do not mean protein assembly, which is a chemical process related to, but distinct from LLPS. We discuss protein assembly in relation to our modeling in Appendix \ref{app:assembly}.
\end{remark}

In Section \ref{sec:reproduction} and Appendix \ref{app:model details}, we describe how this model is consistent 
with certain thermodynamic aspects of LLPS. 
Based on mass-action kinetics, the intensities of the reactions in \eqref{eq:main crn} are defined at $\x=(p,d_m,\dots,d_L)$ as
\begin{align*}
&\lambda_{mP\to D_m}(\x)=a_1 p(p-1)\cdots (p-m+1)\mathbbm{1}_{p\ge m}, \quad \quad &\lambda_{D_m\to mP}(\x)=b_m d_m \mathbbm{1}_{d_m\ge 1},\\
&\lambda_{P+D_k\to D_{k+1}}(\x)=a_k p d_k \mathbbm{1}_{p\ge 1}\mathbbm{1}_{d_k\ge 1}, \text{ and }  &\lambda_{D_{k+1}\to P+D_k}(\x)=b_{k+1} d_{k+1}\mathbbm{1}_{d_{k+1}\ge 1},
\end{align*} for $k=m,\dots,L-1$.

Abusing notation, for a reaction $y\to y'$, we can regard the complexes $y$ and $y'$ as vectors. For example, the complexes $mP$ and $P+D_m$ can be represented by $(m,0,\dots,0)$ and $(1,1,0,\dots,0)$, respectively. Then for $y\to y'$,  the reaction vector can be denoted by $y'-y$, which means the net gain of species via the reaction $y\to y'$.
The probability distribution $p(\x,t)=P(X(t)=\x)$ of $X(t)$ is governed by the \emph{chemical master equation}, a  system of ordinary differential equations defined as \eqref{eq:master_eq}
\begin{align}\label{eq:master_eq}
    \frac{d}{dt}p(\x,t)&=
    \lambda_{mP\to D_m}(\x-(-m,1,0,\dots,0))p(\x-(-m,1,0,\dots,0),t)\\
    &+\lambda_{D_m\to mP}(\x-(m,-1,0,\dots,0))p(\x-(m,-1,0,\dots,0),t) \notag\\
    &+
    \sum_{k=m}^{L-1} \left( \lambda_{P+D_k\to D_{k+1}}(\x-\nu^+_k)p(\x-\nu^+_k,t) +\lambda_{D_{k+1}\to P+D_k}(\x-\nu^-_k)p(\x-\nu^-_k,t)\right )\notag\\
    &-\left(\lambda_{mP\to D_m}(\x)+\lambda_{D_m\to mP}(\x)  +\sum_{k=m}^{L-1} \left( \lambda_{P+D_k\to D_{k+1}}(\x) +\lambda_{D_{k+1}\to P+D_k}(\x)\right )  \right ) p(\x,t), \notag
\end{align}

where $\nu^+_k$ and $\nu^-_k$ are the reaction vectors associated with $P+D_k\to D_{k+1}$ and $D_{k+1}\to P+D_k$, respectively for each $k$.
\begin{remark}
    A well-known reaction network, the so-called Becker-D\" oring model, has a similar reaction network structure to \eqref{eq:main crn} \cite{ball1988Beck, hingant2016Beck, hingant2021Beck_QSD}. This model is often employed to describe particle aggregation. However, due to the absence of the threshold of the protein concentration, this prior model is limited to describing protein assembly rather than phase separation (see \cite{banani2017biomolecular} for the difference between protein assembly and phase separation).
\end{remark}

\subsection{Stationary distributions}
We analyze the differences between 2d-LLPS and 3d-LLPS using their stationary distributions. The stationary distribution $\pi$ is the limiting distribution of $p(\x,t)$ defined as 
\begin{align*}
    \lim_{t\to \infty}p(\x,t)=\pi(\x) \quad \text{for each $\x$}.
\end{align*}

One advantage of the chemical reaction network description of LLPS is that we can obtain the closed form of $\pi$. To do this, we use Theorem \ref{thm:def 0} \cite{AndProdForm} (see Appendix \ref{app:def 0}). 

Using Theorem \ref{thm:def 0}, the stationary distribution of the associated Markov chain for \eqref{eq:main crn} is for each $ \x =(p,d_m,\dots,d_L)$
\begin{align}\label{eq:pi}
\pi(\x)
&= M \dfrac{1}{p!}\prod_{k=m}^L	\dfrac{Q_k^{d_k}}{d_k!}
\end{align} for each state $\mathbf{x}$, where 
$ Q_k$ is 
\begin{align}\label{eq:Q_k}
	Q_k : =  \begin{cases}
\left (\dfrac{a_1}{b_m} \right )  & \text{if $k =m$},\\
\bigg(\dfrac{a_1a_m, \cdots a_{k-1}}{b_mb_{m+1} \cdots b_k} \bigg) & \text{if $m < k\le L.$}\\
	\end{cases}
\end{align} 
The constant $M$ is the normalizing constant such that
\begin{align*}
    M=\left (\sum_{\mathbf{x} \in \mathbb S_{\mathbf{x}}} \dfrac{1}{p!}\prod_{k=m }^L	\dfrac{Q_k^{d_k}}{d_k!} \right )^{-1},
\end{align*}
where $\mathbb S_{\mathbf{x}}$ is the state space containing $\mathbf{x}$.


One of the advantages of the closed form of the stationary distribution \eqref{eq:pi} for our model is its ability to generalize the numerical results shown in Section \ref{sec:main results}.
    Indeed, the threshold number of proteins to form droplets ($m$) and the size of the largest droplets ($L$) can vary widely depending on several factors including the concentration of the proteins, their affinities to each other, and the specific conditions of the system \cite{yewdall2021coacervates, qin2017protein}.  
    Owing to computational costs, we often use small qualities of $m$, and $L$ for simulations in this paper since the size and complexity of the state space of the model grows rapidly with these parameters. However, due to the closed form of $\pi$ in \eqref{eq:pi}, we can show that our main results hold for general $m$ and $L$ (see Propositions \ref{prop:m}--\ref{prop:r}).


\section{Reaction rates}\label{sec:rates}

We choose the reaction rates in our model (\ref{eq:main crn}) by regarding each protein as a randomly diffusing particle. We consider either a disc (for 2d-LLPS) or a sphere (for 3d-LLPS) surrounding droplet to present a target of particles (proteins) as shown in Figure \ref{fig:overall}D. We view a protein as a circular/spherical object with the \emph{hydrodynamic radius} $r$ that takes into account the hydrodynamic length of the protein or the interaction range of a single protein as shown in Figure \ref{fig:overall}D. Assuming the volume of a droplet $D_k$ is proportional to the number of proteins $k$,  the radius $r_{k,d}$ of the droplet $D_k$ in $d$-dimensional space satisfies
   \begin{align}\label{eq:radi}
        r_{k,d} =\begin{cases}
            \alpha r k^{1/2} \quad \text{if $d=2$},\\
           \alpha r k^{1/3} \quad \text{if $d=3$},
        \end{cases}
    \end{align}
    for each $k=m, m+1, \dots, L$. (See Figure \ref{fig:overall}D for this derivation).
    The proportionality constant $\alpha$ may vary by the binding affinity and density of the protein aggregation. We simply set $\alpha=1$ throughout this manuscript. 

We assume that the reaction $P+D_k \to D_{k+1}$ fires when a particle (protein) hits the target $D_k$. Hence, the rates $a_k$ can be defined using first hitting times. Similarly, we define $a_1$ and $b_k$'s regarding the proteins as diffusing particles.

Throughout this manuscript, $\Vert v \Vert=\sqrt{\sum_{i=1}^K v_i^2}$ denotes the standard Euclidean norm of a vector $v \in \mathbb R^K$. Furthermore, $d=2$ or 2d and $d=3$ or 3d indicate 2-dimensional LLPS and 3-dimensional LLPS respectively.

\subsection{The initial droplet formation rate, \texorpdfstring{$a_1$}.}\label{sec:a1}
The main idea for the generalized Smoluchowski framework introduced in \cite{flegg2016smoluchowski} is to consider $m$ independent $d$-dimensional Brownian particles $B_i(t)$ ($i=1,2\dots,m$) with diffusion coefficient $\mathcal{D}$ within the spherical or circular system domain $\Omega=\{x: \Vert x \Vert \le R\}$ for some $R>0$. In \cite{flegg2016smoluchowski}, probability fluxes were used to determine the rate constant for $m$ particles to be in proximity. It was also shown that a Markov chain under mass-action kinetics with the generalized Smoluchowski rate can closely approximate the same system modeled with Brownian particles \cite{flegg2016smoluchowski}. Note that in general the shape of the domain is irrelevant when the particle is sufficiently small relative to the domain. Then the reaction rate $a_1$ can be set as 
\begin{align*}  a_1=\bigg[ \mathcal{T}_m  \dfrac{4 \pi^{\alpha_d +1}r_{m,d}^{2\alpha_d} }{\Gamma(\alpha_d)}  \bigg] / V^{m-1},
\end{align*}  
where $r_{m,d}$ is as \eqref{eq:radi}, 
 $\mathcal{T}_m = \mathcal{D}  \times \frac{m^{3/2}}{m!}\bigg(\dfrac{m-1}{2} \bigg)^{(3m-5)/2}$, $\alpha_2 = (m-2)$, $\alpha_3 = (3m-5)/2$, $\Gamma$ is the gamma function, and $V$ is the volume of the system domain (see \cite[Equation (3.22)]{flegg2016smoluchowski}). Note that
\begin{align*}
    V = \begin{cases}
         \pi R^2  & \quad \text{if $d=2$},\\
        \dfrac{4}{3}\pi R^3  &  \text{if $d=3$},
    \end{cases}
\end{align*} 
Then \eqref{eq:radi} yields that

\begin{align}\label{eq:a_1}
	a_1 =\begin{cases} \mathcal{T}_m \bigg[\dfrac{4m^{m-2} }{\Gamma(m-2)}  \bigg] \bigg( \dfrac{r}{R} \bigg)^{2m-2} \dfrac{1}{r^{2}} & \text{if $d=2$},\\
 \mathcal{T}_m\bigg[ \bigg(\dfrac{3}{4} \sqrt{\pi}\bigg)^{m-1} \dfrac{4 m^{(3m-5)/3} }{\Gamma((3m-5)/2)} \bigg]
	\bigg( \dfrac{r}{R} \bigg)^{3m-3} \dfrac{1}{r^{2}}  &\text{if $d=3$},
	\end{cases}
\end{align}
 
We can now find that for given any $m$, the rate of the initial droplet formation $a_1$ in 2d-LLPS is much greater than that of 3d-LLPS if proteins have the same diffusion coefficient  $r \ll R$, i.e., the generating time of a droplet in 2d is faster than in 3d  (Figure \ref{fig:overall}G). Note that all occurrences of $m$ are greater than 3 throughout this paper.

\begin{remark}
    The term $m!$ in $\mathcal T_m$ comes from mass-action kinetic for $mP\to D_m$. The intensity of $mP\to D_m$ under mass-action kinetics is combinatorially defined as it is proportional to
    $\begin{pmatrix}
        P \\ m
    \end{pmatrix}=\dfrac{P(P-1)\dots (P-m+1)}{m!}$.
    Hence we merge the term $1/m!$ to $a_1$. 
\end{remark}

\subsection{The droplet coarsening rate, \texorpdfstring{$a_k$} 
 \ \ for \texorpdfstring{$m+1\le k \le L-1$}.}
In our LLPS models, droplet coarsening happens when a protein hits the droplet. We thus model the rates of the coarsening reaction $P+D_k\xrightarrow{a_k} D_{k+1}$ using the mean first hitting time for a protein to hit a droplet $D_k$ (Figure \ref{fig:overall}F). Hence we first consider the $d$-dimensional annular domain, \begin{align*}
	\Omega_{k,d} := \{ x \in \mathbb{R}^d :  r_{k,d} \leq \lVert x \lVert \leq R \}.
\end{align*} 
As Section \ref{sec:a1}, an individual protein is described by a Brownian motion $B(t)$ with the diffusion coefficient $\mathcal D$. Let $\tau_{k,d}$ denote the first hitting time in a $d$-dimensional space, and then it is defined as  \begin{align}\label{eq:ftp}
	\tau_{k,d} = \inf \{ t \geq 0 :  \lVert B(t) \lVert = r_{k,d}\}.
\end{align}
For $k\in \{m,m+1,\dots,L-1\}$, we have that \cite{Sean2020Rate_a_k},
\begin{align}\label{eq:ak}
 \dfrac{1}{\mathbb E[\tau_{k,d}]} = a_{k} = \begin{cases}
 \dfrac{\mathcal (2\mathcal D/R^2)}{\ln {(r/R)}^{-1} + \ln k^{-1/2}} & \text{if $d=2$},\\
		\\
\dfrac{3\mathcal D}{R^2} k^{1/3} \dfrac{r}{R} &  \text{if $d=3$},
	\end{cases}
\end{align}
The rate of coarsening can also be obtained as a Smoluchowski reaction rate \cite{smoluchowski1918versuch}. It is shown that the Smoluchowski reaction rate for two particles is proportional to the inverse of the mean first hitting time $\mathbb E[\tau_{k,d}]$, implying  consistency between the $a_k$'s and the Smoluchowski rate \cite{shoup_szabo1982Smolu_Hitting}, \cite{Paul2014RateA_k}.
\begin{remark}
  In  the formulation of $\tau_{k,d}$ in \eqref{eq:ftp}, the circular (spherical) target is assumed to be centered at the origin. However by the Markovity of $B(t)$, the initial positions of $B(t)$ and the target $D_k$ are negligible for the derivation of \eqref{eq:ak} provided the target is sufficiently small relative to the size of the domain $\Omega$. In this vein, we need not assume that the domain $\Omega$ is either circular or spherical.
\end{remark}

\subsection{The dissociation rate, \texorpdfstring{$b_k$}.}\label{sec:bi}
We now set the rate of the dissociation reactions $D_{k+1}\xrightarrow{b_k} P+D_k$. One of the key features in LLPS is that the droplets are liquid. Hence we determine $b_k$ using the first time when a diffusing protein inside a droplet hits the boundary of the droplet (Figure \ref{fig:overall}F). 

To define this rate, we introduce important factors of LLPS:  droplet viscosity, surface tension, protein valency, and protein binding affinity. The droplet viscosity is determined by how tightly the proteins are bound to each other inside droplets, which in turn can be determined by valency and binding affinity \cite{polyansky2023protein}.  
The valency of a protein is the number of components of the protein that can bind to others. Binding affinity indicates the strength of the bonds.

Intuitively, proteins forming a droplet will have much smaller mobility inside the droplet. If the droplet is made with high valency and high binding affinity proteins, the viscosity inside the droplet is high, leading to even lower mobility of the proteins inside the droplet. Additionally, the principles of energy minimization related to surface energy, which are intrinsically linked to surface tension \cite{marchand2011surface_tension}, imply that a droplet tends to adopt a circular shape in 2d or a spherical shape in 3d \cite{brown1947surface_shape}. Consequently, it is difficult for proteins inside the droplet to disrupt these shapes. Therefore, we simply incorporate the effects of viscosity, valency, the binding affinity of proteins and surface tension by setting the diffusion coefficient of the protein to be inversely proportional to a dimensionless constant $\mathcal V$ \cite{polyansky2023protein}, which we call the viscosity constant. 
 In particular, we set the diffusion coefficient of the protein in a droplet as $\tilde{\mathcal D} = \dfrac{\mathcal D}{\mathcal{V}}$.
 
We now consider the first time for a Brownian particle to exit a disc (or sphere) of radius $r_{k,d}$.
Let a Brownian particle be initiated at an arbitrary location inside the disc or sphere. 
Then we will consider the first time when the particle hits the boundary of the disc or the sphere of radius $r_{k,d}$.
For each $k\in \{m,m+1,\dots,L\}$, let 
\begin{align*}
\tilde{\tau}_{k, d}= \inf \{ t \geq 0 : \Vert B(t) \lVert \geq  r_{k,d} \},  
\end{align*}
 where $B(t)$ is a Brownian motion in $\mathbb R^d$ with the diffusion coefficient $\tilde{\mathcal{D}}$, and $r_{k,d}$ is defined as in \eqref{eq:radi}. 
The mean first passage time $\tilde \tau_{k,d}$ is \cite{Carr2020diffusion}
\begin{equation}
		\dfrac{1}{\mathbb{E}[ \tilde{\tau}_{k, d} ] } =\begin{cases}
		\dfrac{4\tilde{\mathcal D} }{(r_{k,d})^2} & \quad \text{if $d=2$},\\
		\\
		\dfrac{6\tilde{\mathcal D} }{(r_{k,d})^2} & \quad \text{if $d=3$},
	\end{cases}
\end{equation} Since each $D_k$ contains $k$ proteins, we multiply $k$ to the mean first passage time to set $b_k$. Then
\begin{align*}
    \frac{k}{\mathbb{E} [ \tilde{\tau}_{k, d}]} = b_k  =\begin{cases}
		\dfrac{4k\tilde{\mathcal D} }{(r_{k,d})^2}=\dfrac{4 \mathcal D}{\mathcal V r^2} & \quad \text{if $d=2$},\\
		\\
		\dfrac{6k\tilde{\mathcal D} }{(r_{k,d})^2} =\dfrac{6 \mathcal D k^{1/3}}{\mathcal V r^2}& \quad \text{if $d=3$}.
	\end{cases}
\end{align*}
 Note that the dissociation rate constant $b_k$ does not depend on $k$ in  2d while it does in 3d.

\subsection{Analysis of rate constants}\label{sec:intuition}

To summarize, the reaction rates are given by
\begin{align}\label{eq:ratesummary}
        a_1/t_0 
        &=\begin{cases} \gamma_{2} \varepsilon^{2\alpha_2} \quad & \text{in 2d},\\
 \gamma_{3} 
	\varepsilon^{2\alpha_3}  &\text{in 3d},
	\end{cases}\\
 a_{k}/t_0 
 &= \begin{cases}
 -2/\ln \varepsilon & \text{in 2d for $k\ge m+1$},\\
		\\
3 \varepsilon k^{1/3} & \text{in 3d for $k\ge m+1$},
	\end{cases}\\
 b_k/t_0  
 &=\begin{cases}
		4\dfrac{1}{\mathcal{V}}\varepsilon^{-2} & \text{in 2d},\\
		\\
		6\dfrac{1}{\mathcal{V}}\varepsilon^{-2} k^{1/3}& \text{in 3d},
	\end{cases}
    \end{align}
where $t_0=\mathcal{D}/R^2$ denotes the diffusion timescale, $\varepsilon=r/R\ll1$ measures the lengthscale of protein interactions to the size of the confining spatial domain, $\mathcal{V}\gg1$ measures how protein diffusion slows in droplets, $\alpha_2=m-2$, $\alpha_3=(3m-5)/2$, and
\begin{align*}
    \gamma_{d}
    =\begin{cases}
        m^{3/2}\bigg(\dfrac{m-1}{2} \bigg)^{(3m-5)/2} \bigg[\dfrac{4m^{m-2} }{\Gamma(m-2)}  \bigg] & \text{if }d=2,\\
        m^{3/2}\bigg(\dfrac{m-1}{2} \bigg)^{(3m-5)/2}\bigg[ \bigg(\dfrac{3}{4} \sqrt{\pi}\bigg)^{m-1} \dfrac{4 m^{(3m-5)/3} }{\Gamma((3m-5)/2)} \bigg] & \text{if }d=3.
    \end{cases}
\end{align*}
Note that we have ignored the higher order $k$ dependence in $a_k$ in 2d since we assume that $\varepsilon\ll1$.

There are several things to notice from \eqref{eq:ratesummary}. First, 
\begin{align*}
    a_1^{\text{3d}}/a_1^{\text{2d}}=O(\varepsilon^{m-1}), \quad\text{as }\varepsilon\to0.
\end{align*}
Hence, the formation rate of an initial droplet consisting of $m$ proteins is much faster in 2d than 3d. Second, for $k\ge m+1$,
\begin{align*}a_k^{\text{3d}}/a_k^{\text{2d}}=O(\varepsilon\ln\varepsilon)\quad\text{as }\varepsilon\to 0.
\end{align*}
Hence, droplet coarsening is also faster in 2d than 3d, though the difference between 2d and 3d is not as pronounced as it is for the initial droplet formation rates $a_1^{\text{2d}}$ and $a_1^{\text{3d}}$. Third, 
\begin{align*}
b_k^{\text{3d}}/b_k^{\text{2d}}=O(1)\quad\text{as }\varepsilon\to0.
\end{align*}
Hence, droplet dissociation in 2d and 3d occur at similar rates. Finally, the rate of droplet coarsening and dissociation grow with droplet size $k$ in 3d, but these rates are independent of the droplet size $k$ in 2d (to leading order for $\varepsilon\ll1$).
From this analysis, we can expect that 2d is more favorable for phase separation than 3d as schematically described in Figure \ref{fig:scheme}A. The main results of this paper, which are given in Section \ref{sec:main results}, depend on this analysis (see Figure \ref{fig:scheme}B for a schematic summary).

\begin{figure}[ht]
    \centering
\includegraphics[width=0.93\textwidth]{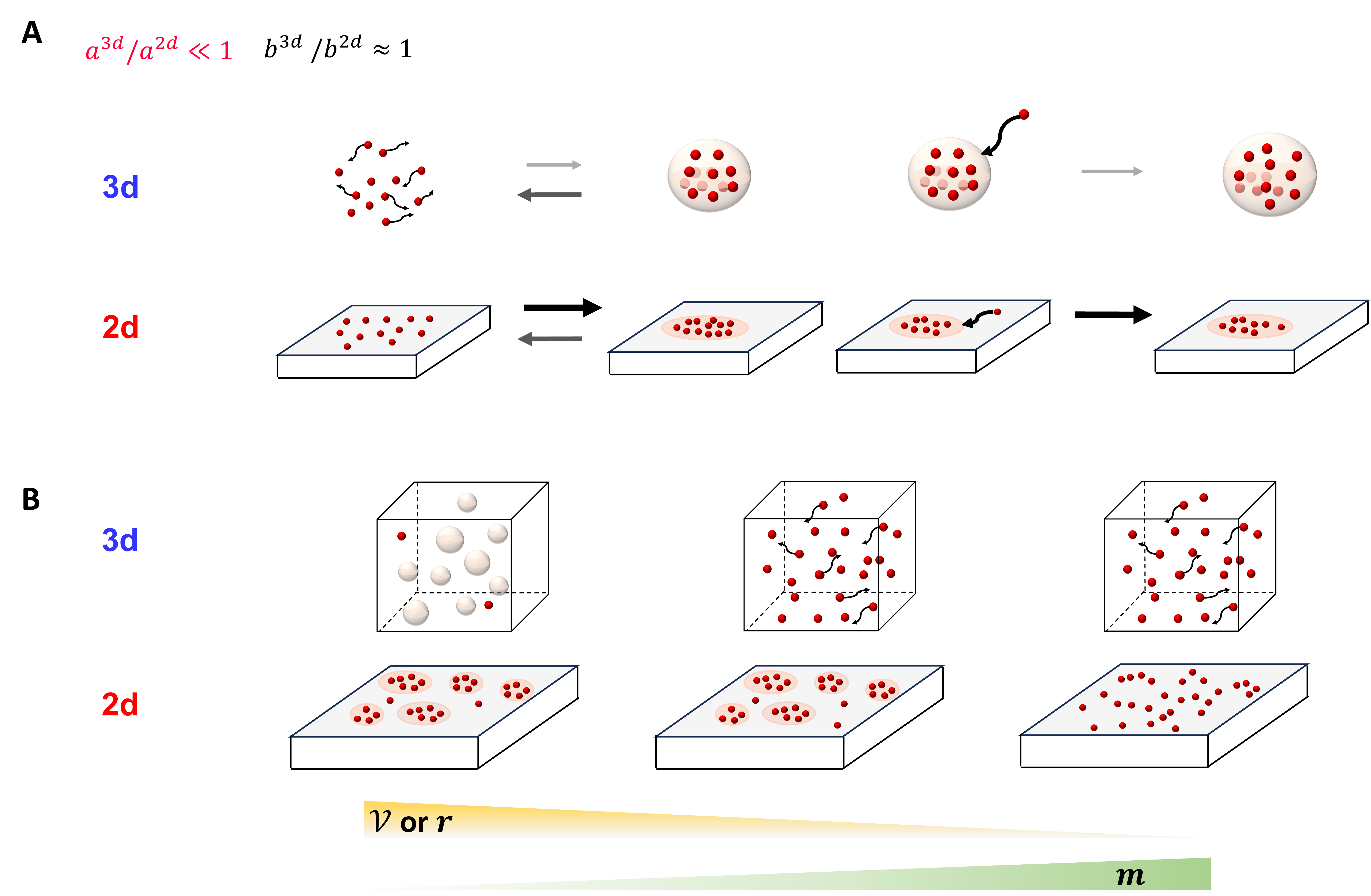}
    \caption{\textbf{Summary of the analysis of the rates}.
    \textbf{A}. Under identical settings, due to the difference between 2d and 3d in $a_1$ and $a_k$'s, droplet formation and coarsening more quickly happen in 2d than 3d. \textbf{B}. We schematically summarize the main results of this paper. The higher the viscosity (or the hydrodynamic radius) both 2d and 3d have more droplets. However 2d and 3d have different responses to changes of $\mathcal V$ or $r$. Similarly, while both have fewer droplets with a higher $m$, they have different responses to changes of $m$. }
\label{fig:scheme}
\end{figure}

\section{Results}\label{sec:main results}
In this section, we provide four main  results for qualitative comparison between 2d-LLPS and 3d-LLPS. We used plots of the stationary distribution $\pi$ and stochastic simulation algorithms to show our results and validate the analyses in Section \ref{sec:intuition}. In Section \ref{sec:analysis}, we provide proofs verifying our results. Without loss of generality, we use a unit radius of the system domain, $R=1$, for all the following simulations (except for Figure \ref{fig:free energy and time trajectory}C).

\subsection{Reproduction of LLPS}\label{sec:reproduction}
In this section, we show the consistency of our model with certain aspects of thermodynamics. Thermodynamic analysis validates that when the total concentration of the system is $\phi^*$, the system admits coexistence of dilute phases of concentration $\phi_1$ and dense phases of concentration $\phi_2$ rather than a single phase of concentration $\phi^*$ (Figure \ref{fig:free energy and time trajectory}A). This is due to the concave region of the free energy function that implies the convex combination of the free energies at $\phi_1$ and $\phi_2$ is less than the free energy at the concentration $\phi^*$. That is, $sF(\phi_1)+(1-s)F(\phi_2)<F(\phi^*)$ for $s$ such that $s\phi_1+(1-s)\phi_2=\phi^*)$ (for more details, see \cite{hyman2014liquid}). This induces phase separation, and the phase diagram is derived as Figure \ref{fig:free energy and time trajectory}B.
 This phase separation and the phase diagram can be reproduced with samples of our Markov chain $X(t)$  \eqref{eq:main crn} modeled with the reaction rates defined in Section \ref{sec:rates}.

\begin{figure}[ht]
    \centering
\includegraphics[width=0.8\textwidth]{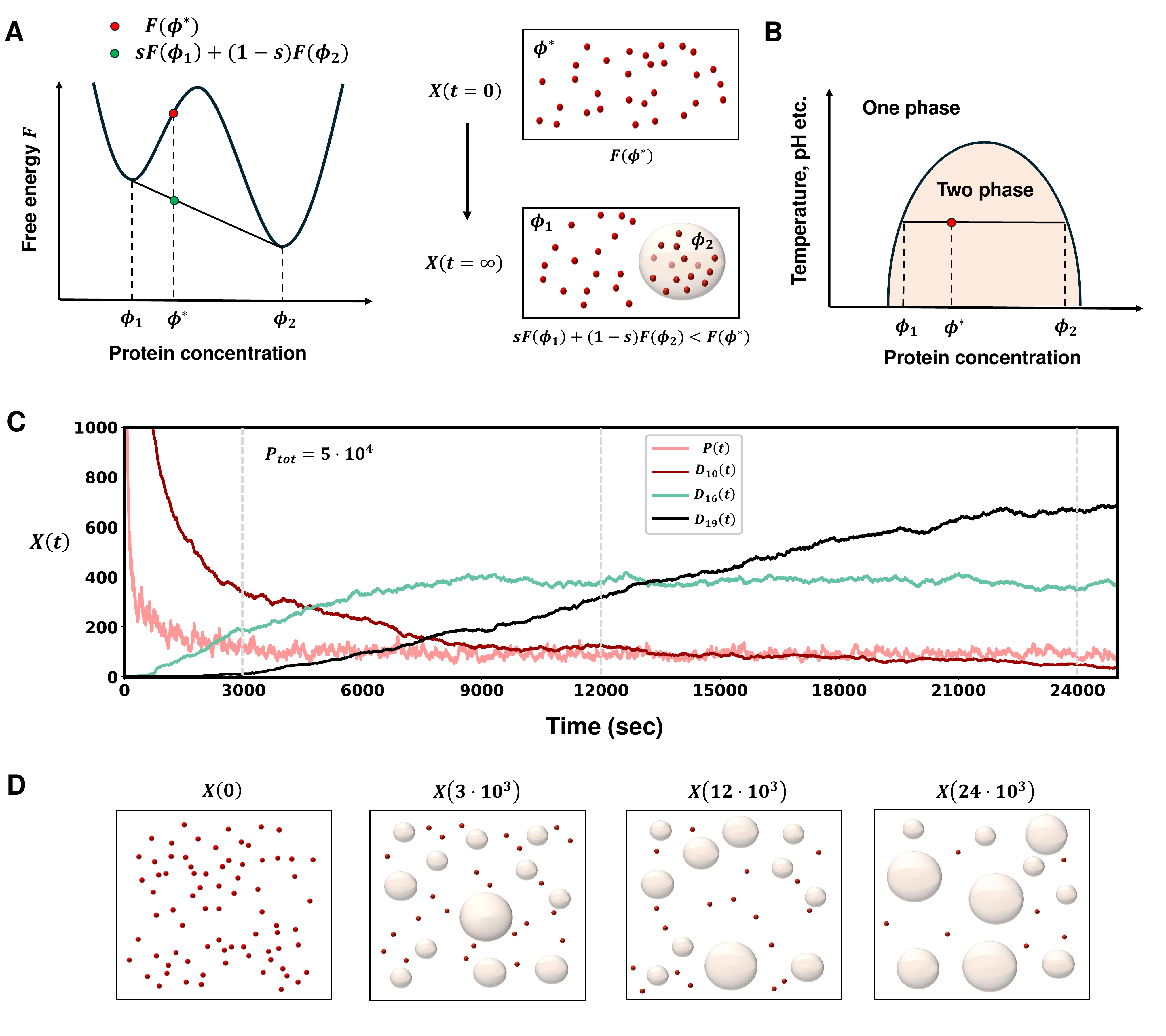}
    \caption{\textbf{Reproduce of phase separation.}
    \textbf{A} and \textbf{B}. Free energy explains phase separation and the phase diagram.  \textbf{C}. $x(t)=(p(t),d_m(t),\dots,d_L(t))$, a time trajectory of $X(t)$ in 2d, shows droplet coarsening and Ostwald ripening. \textbf{D}.  Cartoons of the snapshots of $X(t)$ at four time points.}
\label{fig:free energy and time trajectory}
\end{figure}

We first visualize samples of $X(t)$ in 2d to show how our model can describe phase separation. Once an initial condition $X(0)=(P_{tot},0,0,\dots,0)$ is fixed, where $P_{tot}$ is the initial protein count, we sample a single trajectory in time $x(t)=(p(t),d_m(t),\dots,d_L(t))$ using the statistically exact Gillespie algorithm \cite{Gill76}. The plot of $x(t)$ in Figure \ref{fig:free energy and time trajectory}C (with realistic parameters \cite{snead2022membrane}, see Appendix \ref{app:parameters}) shows the coarsening of the droplets as time passes. Notably, the number of smaller droplets decreases while the number of bigger droplets increases. Hence plots also visualize Ostwald ripening as individual proteins dissociate from a smaller droplet and join a bigger droplet (see Appendix \ref{app:ostwald} for more details). Then we display the sampled state at four time points ($t=0, 3\cdot 10^3, 12\cdot 10^3$, and $24\cdot 10^3$) by randomly distributing the proteins and the droplets over the space, where the counts of the proteins and droplets are given by $x(t)$ (Figure \ref{fig:free energy and time trajectory}D).

Now, by computing the volume fraction of the dense phase (droplets) and the dilute phase (outside droplets), we reproduce the phase diagram. Let $V_{k,d}$ be the volume of $D_k$, the droplet of size $k$. Then
\begin{align*}
    V_{k,d}=\begin{cases}
        \pi r_{k,d}^2 =\pi r^2 k\quad &\quad \text{if $d=2$},\\
        \dfrac{4}{3}\pi r_{k,d}^3 =\dfrac{4}{3}\pi r^3 k  &\quad \text{if $d=3$},
    \end{cases}
\end{align*}
    Both the number of proteins in $D_k$ and the volume of $D_k$ grow linearly in $k$. Hence, the concentration of the proteins inside the droplet $D_k$ is the same for each $k$, which is consistent with the previous analyses for LLPS \cite{alberti2018LLPS, hyman2014liquid}.

First of all, denoting by $\mathbb E_\pi(P)$ and $\mathbb E_\pi(D_k)$ the expected numbers of the proteins on the dilute phase and the droplet of size $k$ 
 with respect to $\pi$ \eqref{eq:pi}, we define the average volume ratio of the droplets to the system size $V$ as
 \begin{align}\label{eq:vol ratio} 
   \rho&:= \frac{\sum_{k=m}^L V_{k,d} \mathbb E_\pi (D_k)}{V} =\begin{cases}
       \dfrac{\pi  r^2(P_{tot}- \mathbb E_\pi (P))}{V} \quad &\quad \text{if $d=2$} ,\\
       { }\\
      \dfrac{\pi  \frac{4}{3}r^3(P_{tot}-\mathbb E_\pi (P))}{V} \quad &\quad \text{if $d=3$} ,
   \end{cases}
\end{align}
where we used the conservation of the total protein counts such that 
\begin{align*} P_{tot}:=P(t)+\sum_{k=m}^L k D_k(t) \quad \text{for each time $t$}.
\end{align*}
The $\rho$ in \eqref{eq:vol ratio} can determine whether the system has either a single phase or two phases.
 We assume that the system has two phases if $\rho \in (\rho_1,\rho_2]$ for some $\rho_1 < \rho_2$ and has a single phase otherwise. That is, if droplets and proteins coexist with the volume fraction falling in the range, the system has two phases. Furthermore, since the viscosity of liquid droplets and the hydrodynamic radius are specific functions of thermodynamic temperature $T+273$ which is measured in Kelvin, we define the constant $\mathcal V$ as a linear function of $e^{1/(T+273)}$
  and $r$ as a linear function of $T+273$, as referenced in the literature \cite{sherwood1961Exponential_visco, baer2024stokes_radius} (a full description of $\mathcal V$ and $r$ as functions of $T$ is given in Appendix \ref{app:parameters}). In this setting, we regard $\rho$ as a function of the total protein count $P_{tot}$ and the temperature $T$.

To compute the volume fraction with $\pi$ \eqref{eq:pi}, the state space has to be identified. However, when the initial number of the proteins is high, the state space is too large to search numerically.

Therefore, we sample $10^2$ time trajectories, using the Gillespie algorithm \cite{Gillespie77} for 2d with up to $3\cdot 10^4$ reactions, and the tau-leaping algorithm \cite{cao2005tau_leaping} for 3d with up to $3.5\cdot 10^4$ reactions. Then we empirically compute $\rho$ using the samples. Supplementary plots show that the samples with $3\cdot 10^4$ reactions in 2d and $3.5\cdot 10^4$ reactions in 3d closely approximate the stationary state of the system in 2d and 3d, respectively (Figure \ref{fig:gillespie stability}).

Figure \ref{fig:phase diagram}B is the graph of $\rho$ as a function of $P_{tot}$ with different values of temperature $T$ (Celsius).  
With this $\rho$, the phase diagram was obtained from our model with $\rho_1= 4\cdot 10^{-2}$ and $\rho_2=0.6$, and it turns out to display the well-known concave curve (Figure \ref{fig:phase diagram}C). 

\begin{remark}
    The plateau of $\rho$, which appears for small values of $P_{tot}$, describes that droplets are not formed when the total concentration 
$P_{tot}/V$ is small. Hence our model also reproduces the threshold protein concentration for phase separation.
\end{remark}

\begin{remark}
    The numbers $4\cdot10^{-2}$ and $0.6$ in the criteria of two phases $\rho\in (4\cdot10^{-2}, 0.6]$ were determined arbitrarily. Especially, to obtain $\rho>0.6$, a large total number of the proteins $P_{tot}$ is required leading to high computational costs. Nonetheless, the trend of the volume fraction curve in Figure \ref{fig:phase diagram}B guarantees reproduction of the phase diagram using the criteria of two phases $\rho \in (\rho_1, \rho_2]$ even if $\rho_2$ is selected sufficiently large. 
\end{remark}

\begin{remark}
Note that Case et al. \cite{case2022membranes} mentioned that two-dimensional spaces, such as cell membranes, shift the phase diagram to the left, promoting nucleation. Our model can also reproduce such a shift of the phase diagram in 2d-LLPS (Figure \ref{fig:phase diagram}C).
\end{remark}

\begin{figure}[t]
    \centering
\includegraphics[width=0.8\textwidth]{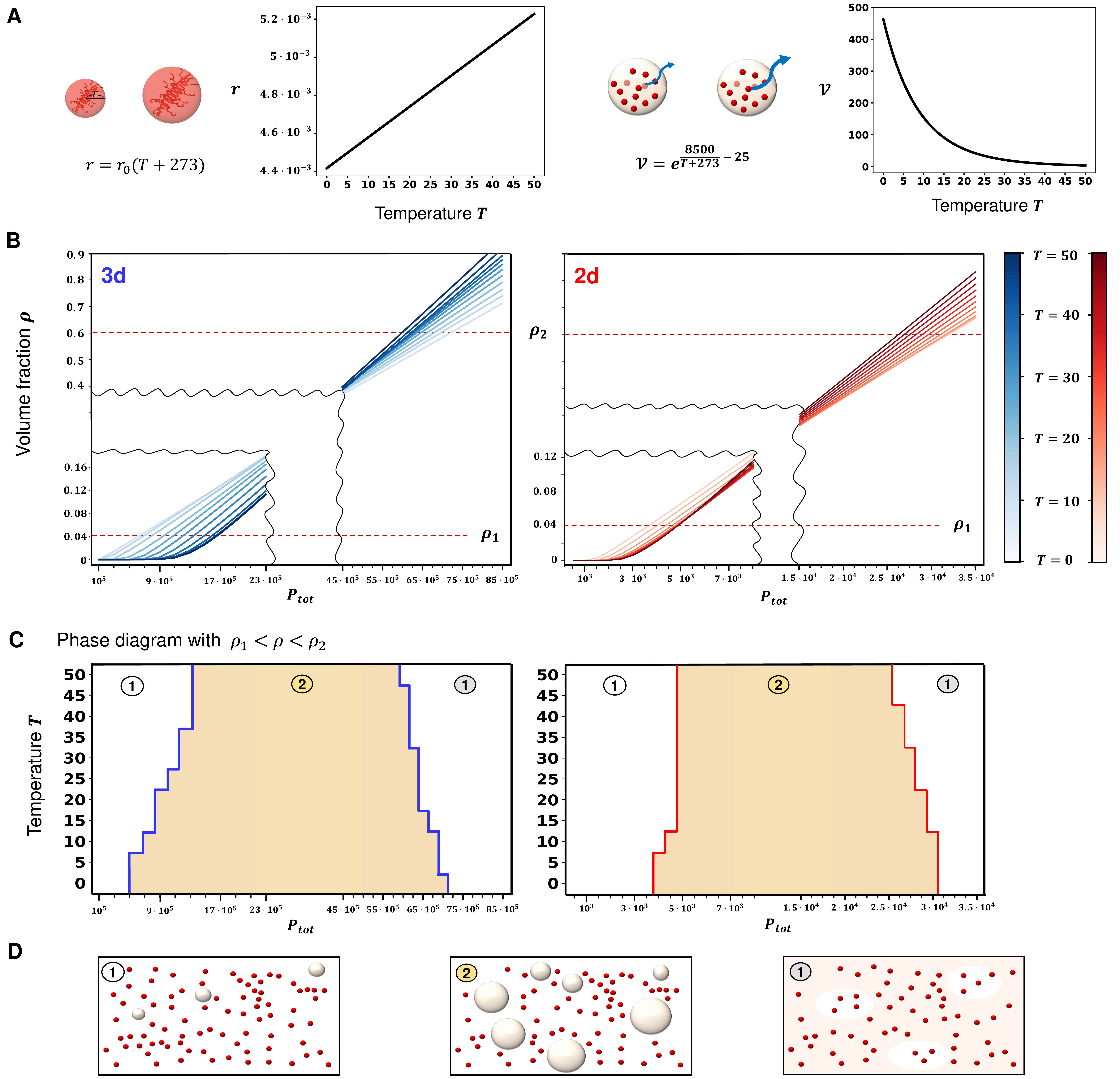}
    \caption{ \textbf{Reproduce Phase diagram which is from energy theory in physics.}\\
\textbf{A.} Temperature effect to the hydrodynamic radius $r$ and viscosity $\mathcal{V}$ with the graphs. \textbf{B.} The volume fraction \eqref{eq:vol ratio} as a function of $P_{tot}$ in 2d and 3d. $\rho_1$ and $\rho_2$ are the criteria for phase separation.  \textbf{C.} Phase diagrams in 2d and 3d. For example, a system at \textcircled{1} and \textcircled{2} will have a single phase and two phases, respectively, which are schematically illustrated in \textbf{D}.}
\label{fig:phase diagram}
\end{figure}

\subsection{Higher viscosity is required for LLPS in 3d.}\label{sec:result_valency }

\begin{figure}[htbp]
    \centering
\includegraphics[width=0.9\textwidth]{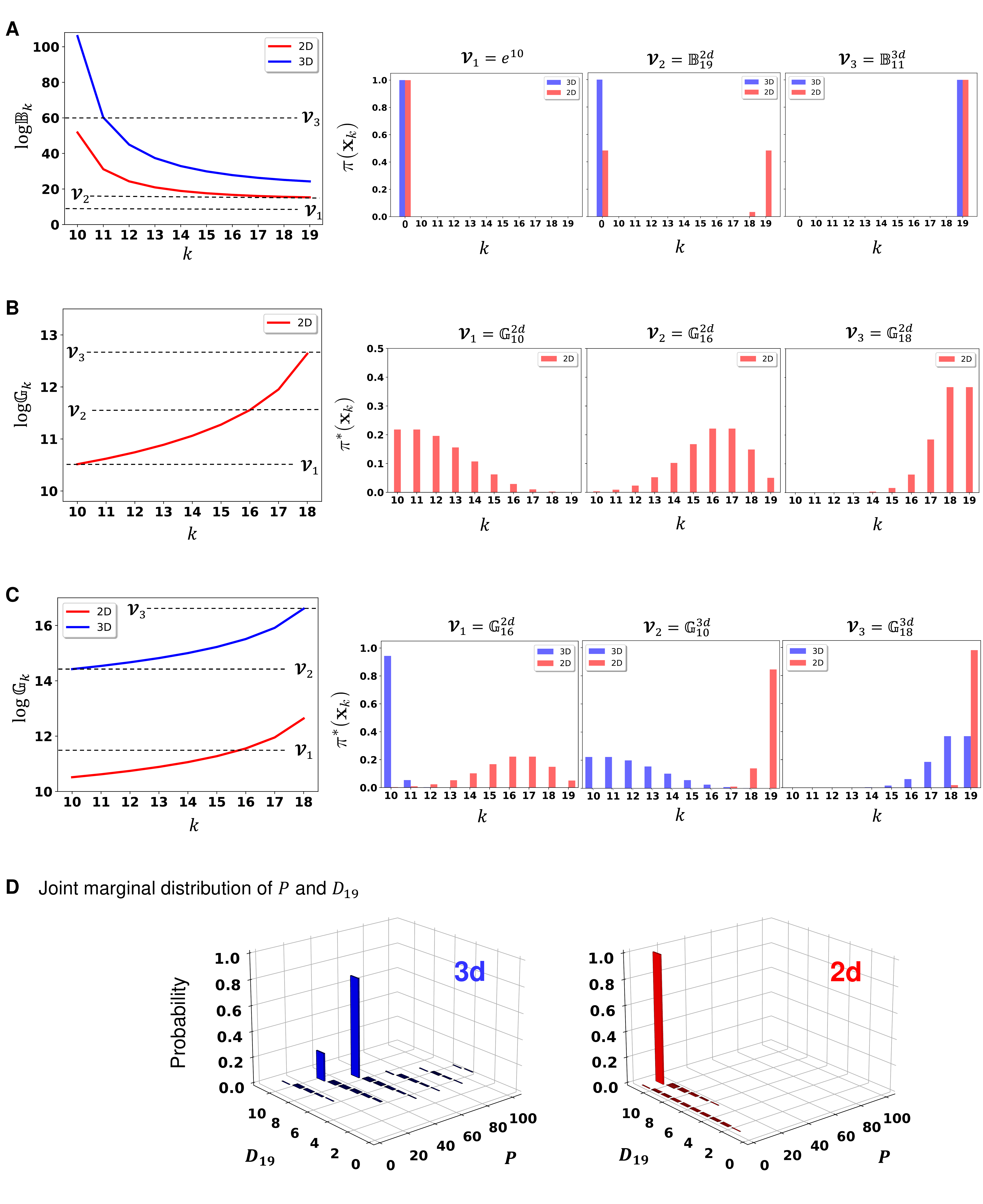}
    \caption{ \textbf{Effect of $\mathcal V$ in droplet formation.} 
    \textbf{A-C.} The log-scaled graph of $\mathbb B_k$ and $\mathbb G_k$ defined as \eqref{eq:B_k} and \eqref{eq:G_k}, respectively (left). The stationary distributions of $X$ with $(m,L) = (10,19)$ and $P_{tot} = L$ show the different probabilities of droplet formation and local maximums in 2d and 3d for certain values of $\mathcal V_i$'s (right). For \textbf{B} and \textbf{C}, for clearer visualization, we used $\pi^*(\x_k)=\pi(\x_k| P(0)\neq L)$, the stationary distribution conditioned on states consisting of at least one droplet. \textbf{D.} The joint marginal stationary distributions of $P$ and $D_{19}$ with $P_{tot}=200$.}
\label{fig:viscosity}
\end{figure}

In this section, we show that a higher viscosity is necessary for droplet formation in 3d than 2d by showing how the viscosity constant $\mathcal{V}$ alters the shape of the stationary distributions $\pi$ of 2d-LLPS and 3d-LLPS. To visualize the stationary distributions, we set the initial protein count $P_{tot}=L$ and $2m>L$ so that the state space $\mathbb S$  of the Markov chain $X(t)$ is 
\begin{align}\label{eq:state space}
    \mathbb S=&\{(L,0,\dots,0),(L-m,1,0,\dots,0), (L-m-1,0,1,0,\dots,0),\dots,(0,0,\dots,1)\},
\end{align}
and hence it can be linearly aligned.
We denote these states by 
\begin{align*}
     \mathbf x_0=(L,0,0,\dots,0), \mathbf x_m=(L-m,1,0,\dots, 0), \mathbf x_{m+1}=(L-m-1,0,1,0,\dots,0),\quad \cdots \\
     \mathbf x_L=(0,0,0,\dots,1).
\end{align*}
We are interested in finding a range of $\mathcal V$ for which the stationary distribution has a peak at a two phases state $\x_k$ for some $k\ge m$, where a droplet is formed (Figure \ref{fig:viscosity}A).  
To do that, we find $\mathbb B_k$'s such that 
\begin{equation}\label{eq:B_k}
	\mathbb{B}_{k} \leq \mathcal{V} < \mathbb{B}_{k-1} \quad  \text{if and only if} \quad \pi (\mathbf x_0 ) \leq \pi (\mathbf x_j ) 
\end{equation}
for any $k \leq j \leq L$. Using the closed forms of $\mathbb B_k$ for both 2d-LLPS and 3d-LLPS, it turns out that higher $\mathcal V$ is required for $\pi$ to have a peak at $\x_k$ for some $k$ in 3d than 2d (Figure \ref{fig:viscosity}A).  As mentioned in Section \ref{sec:intuition}, this result is not surprising because the reaction rates $a_1$ and $a_k$'s are bigger in a 2d space while $b_k$'s are comparable (Figure \ref{fig:overall}G).

The closed form of $\pi$ leads us to other interesting analyses about the relation between $\mathcal V$ and the droplet size distribution. For example, we investigated the range of $\mathcal V$ for which $\pi$ has a local maximum at the state $\mathbf x_k$ for $m\le  k\le L$ (Figure \ref{fig:viscosity}B,C). 
  Let $\mathbb{G}_k$ denote the viscosity constant such that  
  \begin{align}\label{eq:G_k}
      \mathbb G_{k-1} \le \mathcal V\le \mathbb G_k \quad  \text{if and only if} \quad 
 \pi(\mathbf x_j)  \le \pi(\mathbf x_k)
  \end{align} for all $m\le j \le L$.
  3d-LLPS also required a higher $\mathcal V$  to have a local maximum at $\x_k$ for some $k\ge m$ than 2d-LLPS (Figure \ref{fig:viscosity}C). 
$\mathbb G_k$ can also be related to an important experimentally observed phenomenon, droplet arrest (the growth of small or mid-size of droplets is paused) \cite{snead2022membrane}. 
 Precise calculations about $\mathbb B_k$ and $\mathbb G_k$ are given in Section \ref{sec:theoretical viscosity}.

 For general state spaces with larger $P_{tot}$, we can check the same trend. With $P_{tot}=200$, the selection of marginal stationary distributions for the counts of proteins $P(t)$ and the count of the largest droplet $D_L(t)$ show that a smaller number of the largest droplets are produced in 3d than in 2d (Figure \ref{fig:viscosity}D).

\subsection{For large thresholds, droplets can be formed in 2d but not in 3d}
\label{sec:result_m}
 It was claimed that membranes reduce the threshold concentration for phase separation  \cite{snead2022membrane, case2022membranes, snead2019control}.
Snead et al.\  \cite{snead2022membrane}
showed that anchoring proteins onto membranes may induce a shift of the threshold concentration for phase separation compared to the threshold of 3d LLPS. Motivated by these experimental findings, in this section we study the effect of $m$ (the threshold number of proteins for forming droplets) on LLPS. By varying $m$, we measure the probability of the state where droplets are formed. We used identical parameters for 2d-LLPS and 3d-LLPS. We identified a range of $m$ for which the droplet formation probability is nearly one in 2d. However, in contrast, it is nearly zero for 3d. We prove this mathematically in Section \ref{sec:analysis m}

Let $\x_0$ denote  $(P_{tot},0,\dots,0)$, the state without droplets.
 We used the probability $1-\pi(\mathbf x_0)$ as a function of $m$ to measure the probability that proteins form droplets.
 For either a large or a small $m$, both the probability $1-\pi(\xn)$ is either nearly $0$ or $1$ for both 2d-LLPS and 3d-LLPS. Interestingly, for an $m$ in the intermediate range,  $1-\pi(\xn)$ in 2d can be much greater than in 3d (Figure \ref{fig:threshold}). We prove the existence of such a range $m$ in Section \ref{sec:analysis m}.
This difference mainly arises from the rate constant $a_1$ as we highlight in Section \ref{sec:intuition}. In 2d, $m$ copies of proteins closely gather more frequently than 3d. This interpretation is consistent with the claim of the experimental previous study \cite{snead2022membrane}, the enhancement of the local concentration of the proteins on membranes. 
 Thus we can imagine that cells use membranes to efficiently facilitate the formation of biomolecular condensates at lower costs \cite{snead2019control}.

Furthermore, Figure \ref{fig:threshold} shows that droplet formation probabilities $1-\pi(\xn)$ dramatically decrease around certain $m$ (indicated by circles) in both 2d and 3d.  
This indicates the sensitivity of droplet formation to the minimum number of proteins or nucleation barriers \cite{martin2021multi}.

\begin{figure}[ht]
    \centering
\includegraphics[width=0.8\textwidth]{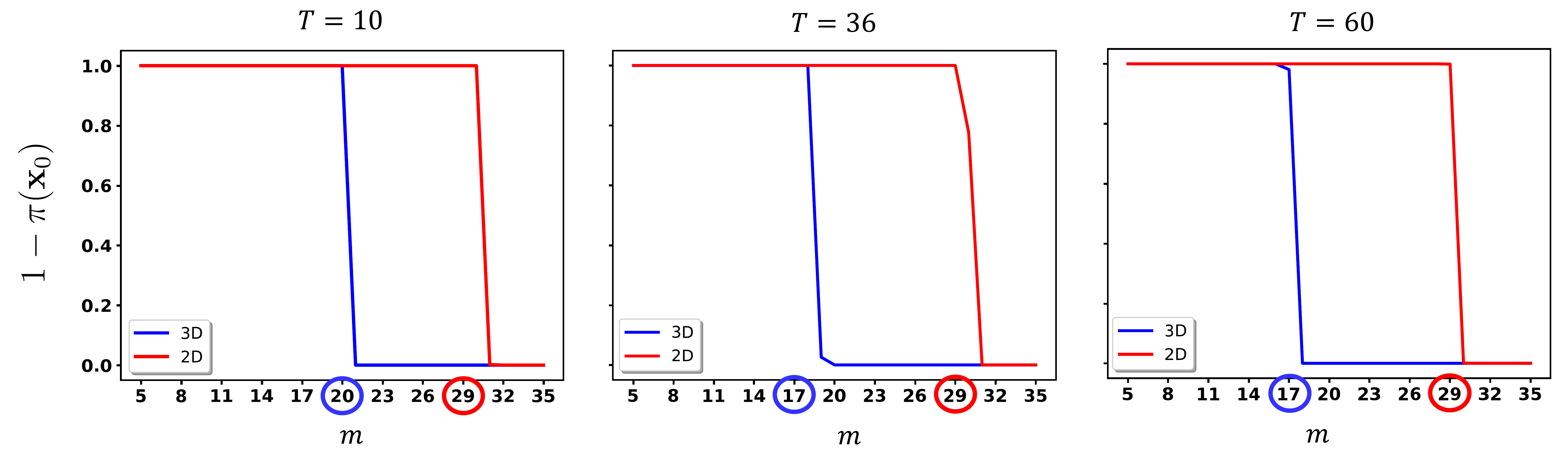}
    \caption{ \textbf{Effects of $m$.}
 The plots of the probabilities of droplet formation ($1-\pi(\xn)$) as functions of the threshold $m$ with three different choices of temperature $T$. For both 2d and 3d, there are critical $m$ (indicated with circles) where the probabilities dramatically drop.}
\label{fig:threshold}
\end{figure}

\subsection{Reduction of the hydrodynamic radius in 2d may not be significant}
\label{sec:result_r}
In Section \ref{sec:intuition}, we analyzed how much the coarsening rates $a_k$'s and the droplet formation rate $a_1$ are greater in 2d than in 3d. This leads to the main difference, and 2d-LLPS tends to have more droplets with higher chances in the long run. However, by examining the dependence of $a_1$ on $r$ \eqref{eq:a_1}, it can be predicted that anchoring a protein to a membrane surface can reduce the hydrodynamic radius (Figure \ref{fig:radius} A). This reduction, in turn, may inhibit droplet formation in 2d.
In this section, despite the reduction of $r$ in 2d, the probability of droplet formation is still higher in 2d than in 3d as long as the fold change in the hydrodynamic radius is not too large. We further analytically quantify the ratio between the hydrodynamic radii in 2d and 3d, at which the probability of droplet formation in 3d become larger than that in 2d.

Under the same setting of the state space $\mathbb S$ \eqref{eq:state space},
we first display the stationary distributions with different values of $r$. We denote by $r^{2d}$ and $r^{3d}$ the hydrodynamic radius of a protein in 2d and 3d, respectively. 
 We fix $r^{2d}=0.005$ for 2d-LLPS and set $r^{3d}_i$ for 3d-LLPS as $r^{3d}_1=2r^{2d}, r^{3d}_2=5r^{2d}$, and $r^{3d}_3=5.5 r^{2d}$.
 Interestingly, even though $r^{2d}<r^{3d}_1<r^{3d}_2$, the probability of droplet formation, $1-\pi(\xn)$, remains higher in 2d than in 3d for $r^{3d}_1$ and $r^{3d}_2$ (Figure \ref{fig:radius}B left and middle). 
 For $r^{3d}_3$, 2d and 3d have similar $1-\pi(\xn)$ (Figure \ref{fig:radius}B, right). Using the following relation, 
\begin{align}\label{eq:F(H)}
    F(\mathbb H):&= \log \left (\frac{\sum_{k=m}^L \pi^{3d}(\mathbf x_k)/ \pi^{3d}(\mathbf x_0)}{\sum_{k=m}^L\pi^{2d}(\mathbf x_k) / \pi^{2d}(\mathbf x_0)} \right )>0 
\end{align}
if and only if
\begin{align*}
    1-\pi^{3d}(\xn) > 1-\pi^{2d}(\xn),
\end{align*} we further see the fold changes of these probabilities by varing $\mathbb H$ such that $r^{3d}=\mathbb H r^{2d}$, where $\pi^{2d}$ and $\pi^{3d}$ denote the stationary distributions of $X(t)$ associated with 2d-LLPS and 3d-LLPS, respectively. The plot in Figure \ref{fig:radius}C, left shows that we can find the critical value for $\mathbb H=\mathbb H_0$ such that $F(\mathbb H_0)=0$ meaning that we have that $1-\pi^{3d}(\xn) > 1-\pi^{2d}(\xn)$ if and only if $\mathbb H>\mathbb H_0$.  
 
Using the closed form of $\pi$ \eqref{eq:pi}, we can also derive a closed form of $\mathbb H_k$ such that for each $k\ge m$, \begin{equation}\label{eq:H_k}
    \frac{\pi^{3d}(\mathbf x_k)/\pi^{3d}(\mathbf x_0)}{\pi^{2d}(\mathbf x_k)/\pi^{2d}(\mathbf x_0)}\ge 1 \quad \text{if and only if} \quad \mathbb{H} \ge \mathbb{H}_k.
\end{equation}
These $\mathbb H_k$'s guarantee a greater probability of droplet formation in 3d because $F(\mathbb H)>0$ if $\mathbb H\ge \mathbb H_k$ for all $k$ by \eqref{eq:F(H)}. On top of this, $\mathbb H_k$ turns out decreasing in $k$ (Figure \ref{fig:radius}C, right). Hence we have 
\begin{align*}
    1-\pi^{3d}(\xn) > 1-\pi^{2d}(\xn) \quad \text{if} \quad \mathbb H\ge \mathbb H_m.
\end{align*}
See Section \ref{sec:result_r} for the derivation of the closed form of $\mathbb H_k$.

\begin{figure}[ht]
    \centering
\includegraphics[width=0.8\textwidth]{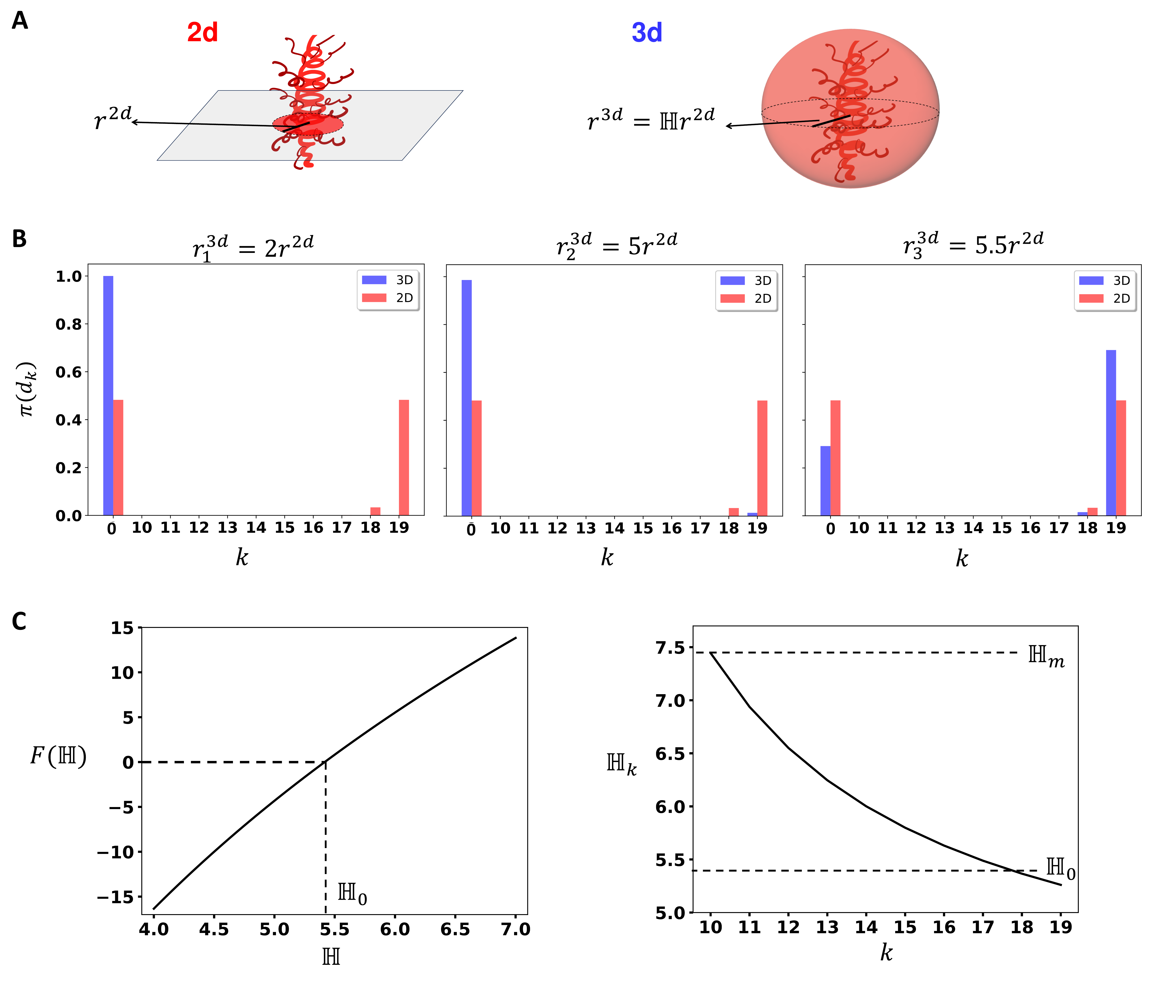}
    \caption{ 
\textbf{Effect of $r$.}
\textbf{A.} The hydrodynamic radii of a protein in 2d and 3d spaces. \textbf{B.} Stationary distributions with a fixed value of $r$ in 2d, and with different values of $r$ in 3d. \textbf{C.} (Left) The plot of $F(\mathbb H)$ in  \eqref{eq:F(H)} as a function of the ratio $\mathbb H=r^{3d}/r^{2d}$ between the hydrodynamic radii. $\mathbb H_0$ is the critical value such as $ 1-\pi^{3d}(\xn) > 1-\pi^{2d}(\xn)$ if and only if $\mathbb H>\mathbb H_0$.  (Right) $\mathbb H_k$ in \eqref{eq:H_k} is decreasing in $k$.}
\label{fig:radius}
\end{figure}

\section{Mathematical Analysis}\label{sec:analysis}
In this section, we validate all the results shown in Sections \ref{sec:result_valency }--\ref{sec:result_r} using the stationary distributions \eqref{eq:pi} of the stochastic model for LLPS. 
Let $\mathbb{S}_{\mathbf{x}}$ denote the closed communication class containing the initial state $\mathbf{x}$. We first provide the closed form of $Q_k$
 's shown in \eqref{eq:Q_k}, which will be used for analysis of the main results. 
 By using the closed form of $a_1$, $a_k$'s and $b_k$'s defined in Sections \ref{sec:a1}--\ref{sec:bi}, we have that for $m \leq k \leq L$,

\begin{align}\label{eq:Qk}
	Q_k &:=  \begin{cases} 
		\dfrac{\mathcal{T}_m}{\mathcal{D}} \bigg[\dfrac{m^{m-2} }{\Gamma(m-2)}  \bigg] \dfrac{\mathcal{V}^{k-m+1}}{\bigg[ y^{2k-2} \displaystyle \prod_{s=m}^{k-1} \ln (y^2/ s) \bigg]} & \quad \text{if $d=2$},\\
 \dfrac{\mathcal{T}_m}{\mathcal{D}}\bigg[ \bigg(\dfrac{3}{4}\bigg)^{m-1} \dfrac{  \sqrt{\pi}^{(m-1)} 2 m^{(3m-4)/3} }{ 3 \Gamma((3m-5)/2)} \bigg]  \dfrac{\mathcal{V}^{k-m+1} }{y^{3k-3}}
		\bigg( \dfrac{ m^{1/3} }{k^{1/3} 2^{k-m}} \bigg)  & \quad \text{if $d=3$},
	\end{cases}	
\end{align} where $y=1/r$ with the hydrodynamic radius of the protein is $r$ and $\mathcal{T}_m$ is defined in \ref{sec:a1}. $\mathcal{D}$ represents the diffusion coefficient of a single protein. We also adopt the convention that $\prod_{s=m}^{m-1} u_s = 1$ for any sequence $u_s$. We further assume that the radius of the domain $R=1$ with sufficiently small $r$  such that  $L^{1/2} r \ll 1$, and a fixed temperature $T$.

\subsection{Theoretical validation for Section \ref{sec:result_valency }}\label{sec:theoretical viscosity}
We demonstrate how the values of $\pi$ vary with the viscosity constant $\mathcal{V}$. For simplicity, we choose the number of initial proteins $P_{tot}=L$ and $L=2m-1$ for a fixed threshold number $m$ of proteins for forming droplets.  In this setting, the state space is as \eqref{eq:state space}, and we use the same notations $\mathbf x_k$'s for the states in this section. 

We first determine a range of viscosity with which the LLPS model has two modes in the stationary distribution. 

\begin{proposition}\label{prop:V for bimodal}
Suppose that $m\ge 5$, $P_{tot}=L$ and $L=2m-1$ in \eqref{eq:main crn}.  Then there exist $\{ \mathbb{B}_k\}_{m\leq k\leq L}$ such that 
\begin{equation*}
    \mathbb{B}_{k} \leq \mathcal{V} \quad  \text{ if and only if } \quad \pi (\mathbf x_0 ) \leq \pi (\mathbf x_k),
\end{equation*}
for both 2d and 3d. Furthermore, there exists $r_0$ such that $\{ \mathbb{B}_k\}_{m\leq k\leq L}$ is a decreasing sequence if $r < r_0$ in both 2d and 3d.
\end{proposition}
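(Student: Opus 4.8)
The plan is to push everything through the explicit expressions \eqref{eq:pi} and \eqref{eq:Qk} on the linearly ordered state space $\mathbb{S}=\{\mathbf{x}_0,\mathbf{x}_m,\dots,\mathbf{x}_L\}$. On $\mathbb{S}$ one has $\pi(\mathbf{x}_0)=M/L!$ and $\pi(\mathbf{x}_k)=M\,Q_k/(L-k)!$ for $m\le k\le L$, so
\[
h_k(\mathcal{V}) \;:=\; \frac{\pi(\mathbf{x}_k)}{\pi(\mathbf{x}_0)} \;=\; \frac{L!}{(L-k)!}\,Q_k
\]
is independent of the normalizing constant $M$. By \eqref{eq:Qk}, once $r$ is small enough for the formula to be meaningful — in $2$d this requires $r^{-2}>L-1$, so that every logarithm $\ln(r^{-2}/s)$, $m\le s\le L-1$, is positive, which is guaranteed by the standing assumption $L^{1/2}r\ll1$; in $3$d there is no such constraint — we may write $Q_k=A_k\,\mathcal{V}^{k-m+1}$ for a constant $A_k>0$ not depending on $\mathcal{V}$. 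Hence each $h_k$ is a continuous, strictly increasing bijection of $(0,\infty)$ onto $(0,\infty)$, so there is a unique $\mathbb{B}_k>0$ with $h_k(\mathbb{B}_k)=1$, namely $\mathbb{B}_k=\bigl((L-k)!/(L!\,A_k)\bigr)^{1/(k-m+1)}$, and $\pi(\mathbf{x}_0)\le\pi(\mathbf{x}_k)$ if and only if $\mathcal{V}\ge\mathbb{B}_k$. This settles the first claim simultaneously in $2$d and $3$d.

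For the monotonicity I would fix $m+1\le k\le L$ and use the telescoping identity $Q_k=Q_{k-1}\cdot(a_{k-1}/b_k)$ coming from \eqref{eq:Q_k}, which gives
\[
h_k(\mathcal{V}) \;=\; (L-k+1)\,\frac{a_{k-1}}{b_k}\;h_{k-1}(\mathcal{V}).
\]
Evaluating at $\mathcal{V}=\mathbb{B}_{k-1}$, where $h_{k-1}(\mathbb{B}_{k-1})=1$ by definition, yields $h_k(\mathbb{B}_{k-1})=(L-k+1)(a_{k-1}/b_k)\bigm|_{\mathcal{V}=\mathbb{B}_{k-1}}$. Since $h_k$ is strictly increasing with $h_k(\mathbb{B}_k)=1$, we have $\mathbb{B}_k<\mathbb{B}_{k-1}$ precisely when $h_k(\mathbb{B}_{k-1})>1$. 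So the whole proposition reduces to the single scalar inequality
\[
(L-k+1)\,\frac{a_{k-1}}{b_k}\bigm|_{\mathcal{V}=\mathbb{B}_{k-1}} \;>\; 1 \qquad\text{for every }m+1\le k\le L,
\]
valid for all sufficiently small $r$.

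To check this I would plug in the explicit rates from Section \ref{sec:rates} together with the explicit $\mathbb{B}_{k-1}=\bigl((L-k+1)!/(L!\,A_{k-1})\bigr)^{1/(k-m)}$ from the first paragraph. With $y=1/r$ and $R=1$: $a_{k-1}/b_k$ is linear in $\mathcal{V}$ with coefficient of order $y^{-2}/\ln y$ in $2$d and of order $y^{-3}$ in $3$d (up to $k$-dependent factors bounded above and below for $m+1\le k\le L$), while solving $Q_{k-1}(\mathcal{V})=(L-k+1)!/L!$ shows $\mathbb{B}_{k-1}$ is of order $y^{(2k-4)/(k-m)}$ times a power of $\ln y$ in $2$d and of order $y^{(3k-6)/(k-m)}$ in $3$d. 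Multiplying,
\[
(L-k+1)\,\frac{a_{k-1}}{b_k}\Bigm|_{\mathbb{B}_{k-1}} \;\asymp\;
\begin{cases}
y^{\frac{2(m-2)}{k-m}}\,(\ln y)^{-\frac{1}{k-m}} & \text{in }2\text{d},\\
y^{\frac{3(m-2)}{k-m}} & \text{in }3\text{d},
\end{cases}
\]
and the exponents of $y$ are strictly positive because $m\ge5$ and $k>m$. Thus each of these finitely many quantities (there are $L-m=m-1$ values of $k$, for two dimensions) tends to $+\infty$ as $r\to0$, polynomial growth in $y$ dominating the logarithmic correction in $2$d; taking $r_0$ to be the least of the finitely many threshold radii below which the corresponding quantity exceeds $1$ gives a single $r_0$ that works for both dimensions and makes $\{\mathbb{B}_k\}_{m\le k\le L}$ strictly decreasing whenever $r<r_0$.

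The only genuinely laborious step is the last one: carrying the $m$- and $k$-dependent constants — and, in $2$d, the product $\prod_{s=m}^{k-2}\ln(y^2/s)$ of logarithms — through the expression for $\mathbb{B}_{k-1}$, multiplying by $a_{k-1}/b_k$, and confirming that the surviving power of $y$ is exactly $\tfrac{2(m-2)}{k-m}$ (resp.\ $\tfrac{3(m-2)}{k-m}$), which is where $m\ge5$ (indeed any $m\ge3$) is used. Everything else — existence, the telescoping $Q_k=Q_{k-1}a_{k-1}/b_k$, and the reduction of the monotonicity of $\{\mathbb{B}_k\}$ to a single inequality via strict monotonicity of $h_k$ — is routine, and one only needs to remember to shrink $r_0$ enough to keep the closed form \eqref{eq:Qk} valid (positivity of the $2$d logarithms), which is already implied by $L^{1/2}r\ll1$.
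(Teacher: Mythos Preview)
Your argument is correct and follows essentially the same path as the paper's: both read off $\mathbb{B}_k$ from $\pi(\mathbf{x}_k)/\pi(\mathbf{x}_0)=\tfrac{L!}{(L-k)!}Q_k=1$ and then show the sequence decreases for small $r$ by isolating the dominant power of $y=1/r$. The only organizational difference is in the monotonicity step: the paper works directly with the explicit closed forms for $\mathbb{B}_k$ and computes $\log\mathbb{B}_k-\log\mathbb{B}_{k+1}$, extracting the leading coefficient $\tfrac{2(m-2)}{(k-m+1)(k-m+2)}\log y$ in 2d (and $\tfrac{3(m-2)}{(k-m+1)(k-m+2)}\log y$ in 3d), while you instead use the recursion $h_k=(L-k+1)(a_{k-1}/b_k)\,h_{k-1}$ together with $\mathbb{B}_k<\mathbb{B}_{k-1}\iff h_k(\mathbb{B}_{k-1})>1$; since $h_k(\mathbb{B}_{k-1})=(\mathbb{B}_{k-1}/\mathbb{B}_k)^{k-m+1}$, your exponents $\tfrac{2(m-2)}{k-m}$ and $\tfrac{3(m-2)}{k-m}$ are precisely $(k-m+1)$ times the paper's, so the two computations are the same asymptotic in different packaging.
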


\begin{proof}
For any $m\ge 5$, by \eqref{eq:pi} and \eqref{eq:Qk}, we have that for any $m \leq  k \leq L$
\begin{equation*}
\dfrac{\pi(\mathbf x_{k})}{\pi( \mathbf x_{0})} = \dfrac{L!}{(L-k)!}Q_k \geq 1 
\end{equation*}
is equivalent to  $\mathbb{B}_k \le \mathcal V$, where 

\begin{align*}
	\mathbb{B}_k =  \begin{cases}
\bigg[\dfrac{\mathcal{D}}{\mathcal{T}_m} \dfrac{(L-k)!}{L!} \bigg\{\dfrac{ \Gamma(m-2)}{m^{m-2}}  \bigg\} y^{2k-2} 
		 \displaystyle \prod_{s=m}^{k-1} \ln (y^2/ s)  \bigg]^{1/(k-m+1)} &\quad \text{if $d=2$},\\
\bigg[ \dfrac{\mathcal{D}}{\mathcal{T}_m} \dfrac{(L-k)!}{L!} \bigg\{ \bigg(\dfrac{4}{3}\bigg)^{m-1} \bigg( \dfrac{ k^{1/3} 2^{k-m} }{m^{1/3}} \bigg)   \dfrac{ 3 \Gamma((3m-5)/2) }{ \sqrt{\pi}^{(m-1)} 2m^{(3m-4)/3}} \bigg\}   y^{3k-3}
		\bigg]^{1/(k-m+1)}  & \quad \text{if $d=3$}.
	\end{cases}	
\end{align*} 

Now we turn to show that $\mathbb B_k$ is a decreasing sequence in 2d. Note that for each $m \leq k < L$,
\begin{align*}
    \log \mathbb{B}_k - 	\log \mathbb{B}_{k+1} 
	&= C + \dfrac{2m - 4}{(k-m+1)(k-m+2)} \log y +\sum_{s=m}^{k-1}\dfrac{\log( \log(y^2/s))}{(k-m+1)(k-m+2)} + \dfrac{\log(y^2/k)}{k-m+2}, \notag
\end{align*}
where $C= \frac{\log \left ( \frac{\mathcal{D}}{\mathcal{T}_m L! } \frac{ \Gamma(m-2)}{m^{m-2}}\right ) }{ (k-m+1)(k-m+2) } + \frac{\log  (L-k)!  }{k-m+1} -  \frac{\log  (L-k-1)! }{k-m+2}$, which is independent of $y$.  Therefore, there exists $r_0$ such that if $y>\frac{1}{r_0}$, then $\mathbb{B}_k \geq \mathbb{B}_{k+1}$ for any $m \leq k < L$.  The proof for $\{ \mathbb{B}_k \}_{m \leq k \leq L}$ in 3d can be derived similarly such that
\begin{align}
    \log \mathbb{B}_k - \log \mathbb{B}_{k+1} 
	&= \bar{C} + \dfrac{3m - 6}{(k-m+1)(k-m+2)}\log y \notag 
\end{align} where $\bar{C} = \frac{\log \left( \frac{\mathcal{D}}{\mathcal{T}_m L! m^{1/3}}  \left ( ( \frac{4}{3})^{m-1}  \frac{ 3 \Gamma((3m-5)/2) }{ \sqrt{\pi}^{(m-1)} 2m^{(3m-4)/3}} \right )\right) }{ (k-m+1)(k-m+2) } + \frac{ \log \left( (L-k)! k^{1/3}2^{k-m} \right ) }{k-m+1}  - \frac{\log \left( (L-k-1)! (k+1)^{1/3}2^{k-m+1}  \right) }{k-m+2}$. Thus, there exists $r_0$ such that, if $y>\frac{1}{r_0}$, the inequality $\mathbb{B}_k \geq \mathbb{B}_{k+1}$ holds for all $m \leq k < L$.

\end{proof}

\begin{remark}
Proposition \ref{prop:V for bimodal} implies that 
\begin{equation*}
	\mathbb{B}_{k} \leq \mathcal{V} < \mathbb{B}_{k-1} \quad  \text{if and only if} \quad \pi (\mathbf x_0 ) \leq \pi (\mathbf x_j ) 
\end{equation*}
for any $k \leq j \leq L$. Thus, if $\mathcal V<\mathbb B_m$, then $\pi(\mathbf x_0) > \pi(\mathbf x_k)$ for any $m\le k \le L$, which means that there is only one mode at state $\mathbf x_0$.
\end{remark}

Now we turn to the ranges of viscosity with which a local maximum of $\pi$ is at $\mathbf x_k$. 
\begin{proposition}\label{prop:V for local max}
Suppose that $m \geq 5, P_{tot}=L$ and $L=2m-1$ in \eqref{eq:main crn}. Then there exist $\{ \mathbb{G}_k\}_{m\leq k<L}$ such that 
\begin{equation*}
    \mathbb{G}_k \leq \mathcal{V} \quad \text{ if and only if} \quad \pi (\mathbf x_k) \leq \pi(\mathbf x_{k+1}).
\end{equation*} Furthermore, there exists $r_0$ such that $\{ \mathbb{G}_k\}_{m\leq k< L}$ is an increasing sequence if $r < r_0$, for each case of $2d$ and $3d$.
\end{proposition}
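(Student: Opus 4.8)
The plan is to reduce the comparison $\pi(\mathbf{x}_k)\le\pi(\mathbf{x}_{k+1})$ to an inequality that is linear in the viscosity constant $\mathcal V$, using the product form \eqref{eq:pi} together with the closed form \eqref{eq:Qk}. Since $\mathbf{x}_k=(L-k,0,\dots,0,1,0,\dots,0)$ has a single $1$ in the $D_k$-coordinate, \eqref{eq:pi} gives $\pi(\mathbf{x}_k)=M\,Q_k/(L-k)!$, hence
\[
\frac{\pi(\mathbf{x}_{k+1})}{\pi(\mathbf{x}_k)}=(L-k)\,\frac{Q_{k+1}}{Q_k}.
\]
In the ratio $Q_{k+1}/Q_k$ nearly everything in \eqref{eq:Qk} telescopes, leaving a single factor: with $y=1/r$ it equals $\mathcal V/(y^{2}\ln(y^{2}/k))$ in 2d and $\tfrac12\mathcal V y^{-3}(k/(k+1))^{1/3}$ in 3d. (These are positive and well defined because the standing assumption $L^{1/2}r\ll1$ forces $y^{2}>L\ge k$, so every logarithm appearing in \eqref{eq:Qk} is positive.) Since the coefficient of $\mathcal V$ is positive, $\pi(\mathbf{x}_k)\le\pi(\mathbf{x}_{k+1})$ is equivalent to $\mathcal V\ge\mathbb G_k$ with
\[
\mathbb G_k=\frac{y^{2}\ln(y^{2}/k)}{L-k}\ \ (\text{2d}),\qquad
\mathbb G_k=\frac{2\,y^{3}}{L-k}\Big(\frac{k+1}{k}\Big)^{1/3}\ \ (\text{3d}),
\]
which proves the first assertion of the proposition.

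For monotonicity I would compare logarithms, so that the $k$-independent prefactor ($2\log y$ in 2d, $3\log y$ in 3d) drops out. In 2d,
\[
\log\mathbb G_{k+1}-\log\mathbb G_k=\log\frac{L-k}{L-k-1}+\log\frac{\ln(y^{2}/(k+1))}{\ln(y^{2}/k)}.
\]
Because $L=2m-1$, for $m\le k\le L-2$ one has $L-k\in\{2,\dots,m-1\}$, so the first term is positive and at least $\log\frac{m-1}{m-2}$, independently of $y$; the second term is negative but tends to $0$ as $y\to\infty$, uniformly over the finite index set $\{m,\dots,L-2\}$. Thus there is $r_0>0$ such that for $r<r_0$ this difference is positive for all admissible $k$, i.e.\ $\{\mathbb G_k\}$ is increasing. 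In 3d,
\[
\log\mathbb G_{k+1}-\log\mathbb G_k=\log\frac{L-k}{L-k-1}+\frac13\log\frac{k(k+2)}{(k+1)^{2}},
\]
and here \emph{both} terms are independent of $y$; the first is $\ge\log\frac{m-1}{m-2}$ and the second is $\ge\frac13\log(1-(m+1)^{-2})$ for $k\ge m$, and an elementary estimate (Bernoulli's inequality gives $(\tfrac{m-1}{m-2})^{3}\ge1+\tfrac{3}{m-2}>\tfrac{(m+1)^{2}}{m(m+2)}$) shows their sum is positive for every $m\ge5$. Hence in 3d $\{\mathbb G_k\}$ is increasing for every $r$, and one may take the same $r_0$ obtained from the 2d analysis.

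The one genuinely delicate point is the uniform smallness of the $\log$-of-$\log$ correction term in 2d: one has to know that $\log(\ln(y^{2}/(k+1))/\ln(y^{2}/k))\to0$ as $y\to\infty$ at a rate independent of $k$ over the finite range $m\le k\le L-2$. This is immediate (for fixed $k$ the ratio is $1-O(1/\log y)$ and there are finitely many values of $k$), and it is precisely where the hypothesis $r<r_0$ is consumed; the remaining steps — the telescoping of $Q_{k+1}/Q_k$ and the single numerical estimate in 3d — are routine. I would finish, in parallel with the remark following Proposition \ref{prop:V for bimodal}, by observing that combining the equivalence ``$\mathcal V\ge\mathbb G_k\iff\pi(\mathbf{x}_k)\le\pi(\mathbf{x}_{k+1})$'' with the monotonicity of $\{\mathbb G_k\}$ yields the interval characterization of \eqref{eq:G_k}: $\mathbb G_{k-1}\le\mathcal V\le\mathbb G_k$ iff $\pi(\mathbf{x}_j)\le\pi(\mathbf{x}_k)$ for all $m\le j\le L$.
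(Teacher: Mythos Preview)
Your proof is correct and follows essentially the same approach as the paper: both reduce $\pi(\mathbf x_{k+1})/\pi(\mathbf x_k)\ge1$ to an inequality linear in $\mathcal V$ via $(L-k)\,Q_{k+1}/Q_k$, obtain the identical formulas for $\mathbb G_k$, and then compare $\mathbb G_{k+1}/\mathbb G_k$ to $1$. The only cosmetic differences are that in 2d the paper gives an explicit threshold $r_0=1/\sqrt{2(m-1)(1+m^{-1})^{m-1}}$ while you invoke a non-constructive limit, and in 3d the paper verifies positivity by expanding $(m-1)^3m(m+2)-(m-2)^3(m+1)^2$ directly whereas you use Bernoulli's inequality; both are equally valid.
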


\begin{proof} For any $m\geq 5$, by \eqref{eq:pi} and \eqref{eq:Qk}, we have that for any $m \leq  k < L$
\begin{equation*}
    \dfrac{\pi(\mathbf x_{k+1})}{\pi(\mathbf x_k)} = \dfrac{(L-k)!}{(L-k-1)!} \dfrac{Q_{k+1}}{Q_k} \geq 1
\end{equation*}
is equivalent to $\mathbb G_k \le \mathcal V$, where
\begin{equation} 
    \mathbb{G}_k = \begin{cases}
        \dfrac{y^2 \ln(y^2 / k )}{(L-k)} & \quad \text{if $d=2$},\\
        \dfrac{2y^3}{(L-k)} \left(\dfrac{k+1}{k}\right)^{1/3} &\quad \text{if $d=3$},
    \end{cases}
\end{equation} By simply calculation, in 2d, we find that if $y \geq  \sqrt{2(m-1) ( 1 + m^{-1} )^{m-1}}$  then for any $m \leq k < L$,

\begin{equation*}
    \dfrac{\mathbb{G}_{k+1}}{\mathbb{G}_{k}} = \dfrac{(L-k)}{(L-k-1)} \dfrac{\ln(y^2 / (k+1))}{\ln(y^2 / k)} \geq 1. \notag
\end{equation*} 
Hence for $r_0=1/\sqrt{(L-1)(1 + m^{-1})^{L-m}}$, the results holds in 2d.
Similarly, in 3d, we find that for any $5 \le m\le k<L$,

\begin{align}
\dfrac{\mathbb{G}_{k+1}}{\mathbb{G}_{k}} &= \dfrac{(L-k)}{(L-k-1)} \bigg[ \dfrac{k(k+2)}{(k+1)^2} \bigg]^{\frac{1}{3}} \geq \dfrac{m-1}{m-2} \bigg[ \dfrac{m(m+2)}{(m+1)^2} \bigg]^{\frac{1}{3}} \notag = \bigg( 1+ \dfrac{3m^4 - 4m^3 -5m^2 + 2m +8}{(m-2)^3(m+1)^2} \bigg)^{\frac{1}{3}}>1 \notag
\end{align} This implies that $\{\mathbb{G}_k\}_{m \leq k < L}$ is also an increasing sequence in 3d for any choice of sufficiently small $r$.
\end{proof}

\begin{remark}
Proposition \ref{prop:V for local max} implies that 
$\mathbb{G}_{k} \leq \mathcal{V} < \mathbb{G}_{k-1}$ if and only if
\begin{align}
      \pi(\mathbf x_j) \leq \pi(\mathbf x_k) \notag
\end{align}
for any $m \leq j \leq L$. Under this range of the viscosity, the stationary distribution has a local maximum at state $\mathbf x_k$.
Moreover, by \eqref{eq:G_k}, we can see that for sufficiently small $r$, $\mathbb G_k$ in 2d is smaller than $\mathbb G_k$ in 3d. 
\end{remark}

\subsection{Theoretical validation for Section \ref{sec:result_m}}\label{sec:analysis m}

We demonstrate the effect of the threshold  $m$ for droplet formation in 2d and 3d by using the value of the stationary distributions at state $\mathbf x_0=(P_{tot},0,0,\dots,0)$ that do not include droplets. In the context of comparison, we will also use $\pi^{2d}$ and $\pi^{3d}$ to denote the stationary distribution in 2d and 3d, respectively. Through the probabilities $\pi^{2d}(
\xn)$ and $\pi^{3d}(\xn)$, we prove that there exists a range of $m$ for which $\pi^{3d}(\mathbf x_0)$ is much greater than $\pi^{2d}(\mathbf x_0)$. We first derive an inequality for the ratio of the probability of forming no droplets between 2d and 3d.

\begin{proposition}\label{prop:m}
Suppose that $m\ge 5$. For fixed $L\ge m$ and $P_{tot} \ge m$, we have that for any $r<1$, 
\begin{equation*}
    \frac{\pi^{3d}(\xn)}{\pi^{2d}(\xn)} \ge \frac{1+y^{m-1}\sum_{\x \in \mathbb S_{\xn}/\{\xn\}} \frac{\pi^{3d}(\x)}{\pi^{3d}(\xn)}}{1+\sum_{\x \in \mathbb S_{\xn}/\{\xn\}} \frac{\pi^{3d}(\x)}{\pi^{3d}(\xn)}}.
\end{equation*}  where $y=1/r$. 
\end{proposition}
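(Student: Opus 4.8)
The plan is to rewrite $\pi(\xn)$ in terms of the ratios $\pi(\x)/\pi(\xn)$, reduce the claimed inequality to a dimension-by-dimension comparison of the weights $Q_k$, and then extract the factor $y^{m-1}$ directly from the closed forms in \eqref{eq:Qk}. First I would use \eqref{eq:pi}: since $\pi$ sums to $1$ over $\mathbb S_{\xn}$ and, for $\x=(p,d_m,\dots,d_L)\in\mathbb S_{\xn}$, the normalizing constant cancels to give $\pi(\x)/\pi(\xn)=\frac{P_{tot}!}{p!\,\prod_{k=m}^L d_k!}\prod_{k=m}^L Q_k^{d_k}$, we obtain $1/\pi(\xn)=1+\sum_{\x\neq\xn}\pi(\x)/\pi(\xn)$ (the sum over $\x\in\mathbb S_{\xn}\setminus\{\xn\}$), valid in both 2d and 3d since the reaction graph, hence $\mathbb S_{\xn}$, is the same. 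Therefore $\frac{\pi^{3d}(\xn)}{\pi^{2d}(\xn)}=\dfrac{1+\sum_{\x\neq\xn}\pi^{2d}(\x)/\pi^{2d}(\xn)}{1+\sum_{\x\neq\xn}\pi^{3d}(\x)/\pi^{3d}(\xn)}$, whose denominator already matches the right-hand side of the claim, so it remains to show $\sum_{\x\neq\xn}\pi^{2d}(\x)/\pi^{2d}(\xn)\ge y^{m-1}\sum_{\x\neq\xn}\pi^{3d}(\x)/\pi^{3d}(\xn)$.

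Next I would reduce this to a term-by-term estimate on the $Q_k$'s. The combinatorial prefactor $\frac{P_{tot}!}{p!\prod_k d_k!}$ does not depend on the spatial dimension, so $\frac{\pi^{2d}(\x)/\pi^{2d}(\xn)}{\pi^{3d}(\x)/\pi^{3d}(\xn)}=\prod_{k=m}^L\big(Q_k^{2d}/Q_k^{3d}\big)^{d_k}$. Hence it suffices to prove $Q_k^{2d}/Q_k^{3d}\ge y^{m-1}$ for every $m\le k\le L$: granting this, for $\x\neq\xn$ one has $\sum_k d_k\ge1$, and since $r<1$ forces $y=1/r>1$ and thus $y^{m-1}\ge1$, the product is $\ge(y^{m-1})^{\sum_k d_k}\ge y^{m-1}$; summing over $\x$ gives the required inequality.

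For the key estimate I would substitute the closed forms of \eqref{eq:Qk}. The factors $\mathcal{T}_m/\mathcal{D}$ and $\mathcal V^{k-m+1}$ cancel in the ratio, leaving
\[
\frac{Q_k^{2d}}{Q_k^{3d}}=C_m\,\frac{k^{1/3}\,2^{k-m}}{m^{1/3}}\cdot\frac{y^{\,k-1}}{\prod_{s=m}^{k-1}\ln(y^2/s)},\qquad C_m:=\frac{3\,m^{m-2}\,\Gamma\!\big(\tfrac{3m-5}{2}\big)}{2\,\Gamma(m-2)\,(3/4)^{m-1}\,\pi^{(m-1)/2}\,m^{(3m-4)/3}}.
\]
Working under the standing assumptions of Section~\ref{sec:analysis} (so $y^2>L$ and every logarithm above is positive), $0<\ln(y^2/s)\le 2\ln y$ gives $\prod_{s=m}^{k-1}\ln(y^2/s)\le(2\ln y)^{k-m}$, whence $Q_k^{2d}/Q_k^{3d}\ge C_m\,(k/m)^{1/3}\,y^{m-1}\,(y/\ln y)^{k-m}$. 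Since $y/\ln y\ge e>1$ for all $y>1$ and $k\ge m$, this is at least $C_m\,(k/m)^{1/3}\,y^{m-1}\ge C_m\,y^{m-1}$, so the proof reduces to verifying $C_m\ge1$ for every integer $m\ge5$ (for $k=m$ the product is empty and $Q_m^{2d}/Q_m^{3d}=C_m y^{m-1}$ exactly, so this bound is tight).

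\textbf{Main obstacle.} Everything above is bookkeeping except the final inequality $C_m\ge1$ for $m\ge5$. I would check $C_5\approx1.97>1$ by hand and then either prove monotonicity of $m\mapsto C_m$ or invoke Stirling's formula (which yields $\log C_m\sim\tfrac12 m\log m\to\infty$) to cover large $m$; this step is precisely where the hypothesis $m\ge5$ enters, since $C_4<1$. The only other things worth double-checking are that the cancellation of $\mathcal T_m/\mathcal D$ and $\mathcal V^{k-m+1}$ is exact for all $k$ and that the empty-product convention is applied correctly at $k=m$.
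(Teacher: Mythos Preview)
Your proposal is correct and follows essentially the same route as the paper: both reduce $\pi^{3d}(\xn)/\pi^{2d}(\xn)$ to the term-wise inequality $Q_k^{2d}/Q_k^{3d}\ge y^{m-1}$ via $\pi(\x)/\pi(\xn)=\frac{P_{tot}!}{p!\prod_k d_k!}\prod_k Q_k^{d_k}$, compute the ratio, and isolate a dimension-only constant (your $C_m$ is exactly the paper's $\Theta_m m^{1/3}$). The paper bounds $\prod_{s}\ln(y/\sqrt{s})$ via $u\ge\ln u$ (yielding a factor $\sqrt{(k-1)!/(m-1)!}\ge1$) rather than your $\ln(y^2/s)\le 2\ln y$, but this is cosmetic; and your ``main obstacle'' $C_m\ge1$ for $m\ge5$ is precisely the step the paper carries out in full, using the Bohr--Mollerup/Stirling bounds to construct an increasing lower envelope $\bar h(\ell)$ with $\bar h(3)\ge1$---i.e., the monotonicity route you mention.
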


\begin{proof}
Let $m\ge 5$ be fixed. By \eqref{eq:pi} and \eqref{eq:Qk}, we can derive that for any state $\x=(\x_1,\x_m,\dots,\x_L)\in \mathbb{S}_{\xn}/ \{\xn\}$,  \begin{equation}\label{eq:ratio of pis}
\bigg(\dfrac{\pi^{2d}(\textbf{\textbf{x}})}{\pi^{2d}(\textbf{x}_0)} \bigg) \bigg \slash \bigg(\dfrac{\pi^{3d}(\textbf{x})}{\pi^{3d}(\textbf{x}_0)}\bigg) = \prod_{k=m}^{L} \bigg( \dfrac{Q_k^{2d}}{Q_k^{3d}}\bigg)^{\x_k} 
 \end{equation} 
 For any $m \leq k \leq L$, the ratio $Q_k$ of 2d and 3d are 

\begin{align}
Q_k^{2d} \slash Q_k^{3d} &= \Theta_m k^{1/3}\dfrac{y^{k-1}}{ \bigg[\displaystyle \prod_{s=m}^{k-1}\ln (y/ \sqrt{s}) \bigg] } = \Theta_m k^{1/3} \dfrac{y^{k-m}}{\bigg[\displaystyle \prod_{s=m}^{k-1}\ln (y/ \sqrt{s}) \bigg]} y^{m-1} \notag \\
&\geq \Theta_m  k^{1/3} \sqrt{\dfrac{(k-1)!}{(m-1)!}} y^{m-1}  \geq \Theta_m m^{1/3} y^{m-1}   \label{eq:ratio of Qs}
 \end{align} 
where $y=1/r$ and \begin{equation}\label{Theta_m}
    \Theta_m = \dfrac{\Gamma((3m-5)/2)}{\Gamma(m-2)} \bigg(\dfrac{4}{3} \bigg)^{m-2} \dfrac{2}{m \sqrt{\pi}^{(m-1)}}.
\end{equation}
For simplicity, we let $\ell = m-2$ and define $h(\ell)=\Theta_m m^{1/3}$. By the Bohr-Mollerup Theorem \cite{gubner2021gamma}, which gives asymptotics of the gamma function, we have

\begin{align}
    h(\ell) &\geq \frac{\bigg(\frac{3\ell+1}{2}\bigg)^{\frac{3\ell+1}{2}-\frac{1}{2}} e^{-\frac{3\ell+1}{2}}}{\ell^{\ell -\frac{1}{2}} e^{-\ell +\frac{1}{12 \ell}}} \bigg(\dfrac{4}{3}\bigg)^{\ell} \frac{2}{(\ell+2)^{2/3}} \dfrac{1}{\sqrt{\pi}^{^{\ell+1}}} \notag \\
    &> 
\dfrac{\left(\frac{3\ell +1}{2}\right)^{\frac{3\ell}{2}}}{\ell^{\ell - 1/2}} 
    \frac{1}{e^{\frac{\ell}{2}+ \frac{1}{12\ell}+\frac{1}{2}}} \bigg(\dfrac{2}{3}\bigg)^{\ell} \frac{1}{(\ell+2)^{2/3}}:=\bar h(\ell) \label{eq:h and h bar}
\end{align}
Note that $\bar h(\ell)$ is the non-negative function for all $\ell \geq 1$ with $\bar h(3) \geq 1$. By analyzing the derivative of $\log \bar h(\ell)$, we will show that $\bar h(\ell)$ is an increasing function as $\ell \geq 3$. Specifically, we have 

\begin{align}
\dfrac{d}{d\ell}\log \bar  h(\ell)  &= \bigg[ \log \bigg(1 + \dfrac{1}{3\ell} \bigg) + \dfrac{1}{12\ell^2} + \dfrac{1}{2\ell} \bigg]  + \dfrac{1}{2} \log\bigg(\dfrac{3\ell +1}{2} \bigg) - \dfrac{21\ell+22}{6(\ell+2)(3\ell+1)} \notag
\end{align} 
This given expression can be bounded below by sum of two increasing functions for $\ell \geq 3$ as follows \begin{align}
\dfrac{d}{d\ell}\log \bar h(\ell)  \geq 	 \dfrac{1}{2} \log\bigg(\dfrac{3\ell +1}{2} \bigg) + \dfrac{(-21\ell-22)}{6(\ell+2)(3\ell+1)} \notag
\end{align} 
Through some elementary calculations, we can show that $\dfrac{d}{d\ell}\log \bar h(\ell) >0$. This implies that $\log \bar h(\ell)$ is an increasing function of $\ell$ for $\ell \geq 3$. Consequently, we have that $\bar h(\ell)$ is an increasing function, which finally implies $h(\ell)\geq  1$ for any $\ell \ge 3$ by \eqref{eq:h and h bar}. 

Since for each $\x=(\x_1,\x_m,\dots,\x_L)\in \mathbb S_{\xn}/\{\xn\}$, it must hold that $\x_k \ge 1$ for at least one $k\geq m$. Therefore by \eqref{eq:ratio of pis} and \eqref{eq:ratio of Qs}
\begin{equation}\label{eq:ratio ineq}
\dfrac{\pi^{2d}(\textbf{\textbf{x}})}{\pi^{2d}(\textbf{x}_0)}  
 \ge y^{m-1}\dfrac{\pi^{3d}(\textbf{x})}{\pi^{3d}(\textbf{x}_0)}  
\end{equation}
for any $\textbf{x} \in \mathbb{S}_{\textbf{x}_0}/\{\xn\}$. We now establish the following equality such that 
\begin{align*}
\sum_{\textbf{x} \in \mathbb{S}_{\textbf{x}_0} } \pi( \textbf{x}) &= \pi( \textbf{x}_0) \sum_{\textbf{x} \in \mathbb{S}_{\textbf{x}_0} } \dfrac{\pi(\textbf{x})}{\pi(\textbf{x}_0)} = \pi( \textbf{x}_0) \left(1+\sum_{\textbf{x} \in \mathbb{S}_{\textbf{x}_0}/\{\xn\} } \dfrac{\pi(\textbf{x})}{\pi(\textbf{x}_0)} \right) = 1 \notag
\end{align*} So, we conclude that by \eqref{eq:ratio ineq} 

\begin{align*}
    \frac{\pi^{3d}(\xn)}{\pi^{2d}(\xn)} &= \frac{1+\sum_{\textbf{x} \in \mathbb{S}_{\xn}/\{\xn\} } \dfrac{\pi^{2d}(\x)}{\pi^{2d}(\xn)}}{1+\sum_{\x \in \mathbb{S}_{\xn}/\{\xn\} } \dfrac{\pi^{3d}(\x)}{\pi^{3d}(\xn)}} \ge 
   \frac{1+y^{m-1}\sum_{\x \in \mathbb S_{\xn}/\{\xn\}} \dfrac{\pi^{3d}(\x)}{\pi^{3d}(\xn)}}{1+\sum_{\x \in \mathbb S_{\xn}/\{\xn\}} \dfrac{\pi^{3d}(\x)}{\pi^{3d}(\xn)}}.
\end{align*}
\end{proof}

\begin{remark}
Now we show the existence of a range of $m$ where the probability of forming no droplets is significantly different between 2d and 3d in the following remark. To highlight the dependent on $m$, in this section we denote by $\pi^{2d}_m$ and $\pi^{3d}_m$ the stationary distribution of 2d-LLPS and 3d-LLPS, respectively.

Let $5 \le m_0\le L$ such that 
\begin{align}
    \sum_{\x \in \mathbb {S}^{m_0}_{\xn}/\{\xn\}} \frac{\pi_{m_0}^{3d}(\x)}{\pi_{m_0}^{3d}(\xn)} \ge \alpha >0 \label{eq:condition on pi3}
\end{align} for some $\alpha>0$ where $\mathbb{S}^{m_0}_{\xn}$ is the state space for given $m_0$.
Since $\frac{\pi_m(\x)}{\pi_m(\xn)}$ is obviously decreasing with respect to $m$ for any $\x \in \mathbb S^{m}_{\xn}/\{\xn\}$ and for both 2d and 3d, by Proposition \eqref{prop:m} we have that for any $m\in[5,m_0]$
\begin{equation*}
   \frac{\pi_m^{3d}(\xn)}{\pi_m^{2d}(\xn)} 
   \ge 
   \frac{1+y^{m-1}\sum_{\x \in \mathbb S^m_{\xn}/\{\xn\}} \frac{\pi_m^{3d}(\x)}{\pi_m^{3d}(\xn)}}{1+\sum_{\x \in \mathbb S^m_{\xn}/\{\xn\}} \frac{\pi_m^{3d}(\x)}{\pi_m^{3d}(\xn)}} \ge \frac{1+y^{m-1}\alpha}{1+\alpha},
\end{equation*}
here for the second inequality we used that the function $f(z)=\frac{1+\beta z}{1+z}$ is increasing for $z>0$ when $\beta>1$.
For instance, suppose that there exists $m_0$ such that \eqref{eq:condition on pi3} holds with $\alpha=1/2$. This roughly means that $m_0$ is not too big so that $\pi_{m_0}(\xn)$ is relative higher than $\pi_{m_0}(\x)$ for $\x \in \mathbb S^{m_0}_{\xn}/\{\xn\}$. In this case, Proposition \ref{prop:m} implies that for each $m\in [5,m_0]$, we have
\begin{equation*}
    \frac{\pi_m^{3d}(\xn)}{\pi_m^{2d}(\xn)} \ge \frac{1+0.5 y^{m-1}}{1.5},
\end{equation*} 
where $\frac{1+0.5 y^{m-1}}{1.5}$ is a large number if $y=1/r$ is sufficiently large.
\end{remark}

\subsection{Theoretical validation for Section \ref{sec:result_r}}

We explore a sufficient condition for the fold-change constant $\mathbb{H} \geq 1$, which enhances the probability of forming droplets in 3d, when the radius is reduced by anchoring a protein in 2d. This condition is defined by the ratio $r^{3d}/r^{2d} = \mathbb{H}$, where $r^{2d}$ and $r^{3d}$ represent the hydrodynamic radii of individual proteins in 2d and 3d, respectively.

\begin{proposition}\label{prop:r}
Suppose that $m\ge 5$ and $m \leq P_{tot}$.  Then for each $r^{2d}$, there exists  $\mathbb H^*$ such that 
\begin{align}
    1- \pi^{3d}( \mathbf{x}_0 ) \ge 1-\pi^{2d}( \mathbf{x}_0) \notag
\end{align}
if $\mathbb {H} \geq \mathbb {H}^* $, where $\mathbf x_0=(P_{tot},0,\dots,0)$.
\end{proposition}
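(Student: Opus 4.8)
The plan is to translate the claimed inequality into a comparison of unnormalized weights and then exploit the product form \eqref{eq:pi}. Since $\sum_{\mathbf x\in\mathbb S_{\mathbf x_0}}\pi(\mathbf x)=1$ for each model, we have $\pi(\mathbf x_0)=\big(1+\sum_{\mathbf x\in\mathbb S_{\mathbf x_0}/\{\mathbf x_0\}}\pi(\mathbf x)/\pi(\mathbf x_0)\big)^{-1}$, so $1-\pi^{3d}(\mathbf x_0)\ge 1-\pi^{2d}(\mathbf x_0)$ is equivalent to $\sum_{\mathbf x\ne\mathbf x_0}\pi^{3d}(\mathbf x)/\pi^{3d}(\mathbf x_0)\ \ge\ \sum_{\mathbf x\ne\mathbf x_0}\pi^{2d}(\mathbf x)/\pi^{2d}(\mathbf x_0)$. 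The closed communication class $\mathbb S_{\mathbf x_0}$ depends only on $m,L,P_{tot}$ and the network \eqref{eq:main crn}, not on the rate constants, so the two sums run over the same finite index set; hence it suffices to prove the termwise bound $\pi^{3d}(\mathbf x)/\pi^{3d}(\mathbf x_0)\ge \pi^{2d}(\mathbf x)/\pi^{2d}(\mathbf x_0)$ for every $\mathbf x\in\mathbb S_{\mathbf x_0}/\{\mathbf x_0\}$.

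Next I would use \eqref{eq:pi}: for $\mathbf x=(p,d_m,\dots,d_L)$ we have $\pi(\mathbf x)/\pi(\mathbf x_0)=\tfrac{P_{tot}!}{p!}\prod_{k=m}^L Q_k^{d_k}/d_k!$, and the purely combinatorial prefactor is identical in 2d and 3d, so it cancels in the comparison, leaving
\[
\frac{\pi^{3d}(\mathbf x)/\pi^{3d}(\mathbf x_0)}{\pi^{2d}(\mathbf x)/\pi^{2d}(\mathbf x_0)}=\prod_{k=m}^{L}\Big(\frac{Q_k^{3d}}{Q_k^{2d}}\Big)^{d_k}.
\]
Since every state $\mathbf x\ne\mathbf x_0$ has $d_k\ge 1$ for some $k$ and all $d_k\ge 0$, it is enough to show that, for $\mathbb H$ sufficiently large, $Q_k^{3d}\ge Q_k^{2d}$ holds simultaneously for all $m\le k\le L$.

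For this last step I would substitute $y^{2d}=1/r^{2d}$ and $y^{3d}=1/r^{3d}=y^{2d}/\mathbb H$ into the closed forms \eqref{eq:Qk}. The factor $\mathcal V^{k-m+1}$ is common to both and cancels, and after simplification $Q_k^{3d}/Q_k^{2d}=B_k(r^{2d})\,\mathbb H^{3k-3}$, where $B_k(r^{2d})$ is a product of $m$- and $k$-dependent Gamma and power factors times $(r^{2d})^{k-1}\prod_{s=m}^{k-1}\ln\!\big(1/((r^{2d})^2 s)\big)$; under the standing hypothesis $L^{1/2}r^{2d}\ll 1$ each logarithm is positive (equivalently $Q_k^{2d}>0$), so $B_k(r^{2d})>0$. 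Because $m\ge 5$ forces $3k-3\ge 3m-3\ge 12>0$, the ratio $Q_k^{3d}/Q_k^{2d}$ is strictly increasing and unbounded in $\mathbb H$; define $\mathbb H_k^{*}=B_k(r^{2d})^{-1/(3k-3)}$, so $Q_k^{3d}\ge Q_k^{2d}$ iff $\mathbb H\ge\mathbb H_k^{*}$, and set $\mathbb H^{*}=\max_{m\le k\le L}\mathbb H_k^{*}$, which is finite. For $\mathbb H\ge\mathbb H^{*}$ all termwise bounds hold, and the proposition follows.

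The computation carries the whole argument, so there is no deep obstacle; the two points that require care are (i) checking that the state space is genuinely shared by the 2d and 3d models, which is what licenses the term-by-term comparison of the two sums, and (ii) verifying positivity of the prefactor $B_k(r^{2d})$ — this is precisely where the smallness assumption on $r^{2d}$ enters, via positivity of the logarithmic factors in $Q_k^{2d}$. One could instead argue directly that $\sum_{\mathbf x\ne\mathbf x_0}\pi^{3d}(\mathbf x)/\pi^{3d}(\mathbf x_0)\to\infty$ as $\mathbb H\to\infty$ while the 2d sum is fixed, but routing through the termwise bound yields the cleaner explicit threshold $\mathbb H^{*}$.
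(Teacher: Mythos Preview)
Your proposal is correct and follows essentially the same route as the paper: both reduce the inequality $1-\pi^{3d}(\mathbf x_0)\ge 1-\pi^{2d}(\mathbf x_0)$ to the termwise bound $Q_k^{3d}/Q_k^{2d}\ge 1$ for all $m\le k\le L$, observe this ratio scales as $\mathbb H^{3k-3}$, and take $\mathbb H^*=\max_k \mathbb H_k$. Your write-up is in fact slightly more careful than the paper's in spelling out why the termwise comparison suffices (shared state space, cancellation of combinatorial prefactors) and in flagging the positivity of $B_k(r^{2d})$ via the standing smallness assumption on $r^{2d}$.
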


\begin{proof}
By \eqref{eq:Qk}, for any $m \leq k \leq L$, we have
\begin{equation}
        Q_k^{3d}/ Q_k^{2d} =\dfrac{\mathbb{H}^{3k-3}}{\Theta_m \alpha_k} \geq 1
\end{equation} where $\alpha_k = k^{1/3} \left[ \frac{y^{k-1}}{\prod_{s=m}^{k-1} \ln(y/\sqrt{s})} \right]$ with $y=1/r^{2d}$ and $\Theta_m$ is defined as \eqref{Theta_m}. For each $m \leq k \leq L$, we define the sequence $\mathbb{H}_k = (\Theta_m \alpha_k)^{1/(3k-3)}$, which satisfies the following equivalence condition \begin{equation}
        \mathbb{H} \ge \mathbb{H}_k \quad \text{if and only if} \quad Q_k^{3d} /  Q_k^{2d}  \geq 1.
\end{equation} Let $\mathbb{H}^* := \max_{m\leq k\leq L} {\mathbb{H}_k}$. By the definition of  $\pi$ \eqref{eq:pi}, \begin{equation*}
    1- \pi^{3d}( \mathbf{x}_0 ) \ge 1-\pi^{2d}( \mathbf{x}_0) \notag
\end{equation*} if $\mathbb{H} \geq \mathbb{H}^*$ where $\mathbf x_0=(P_{tot},0,\dots,0)$.
\end{proof}

\begin{remark}
We found that there exists $r_0$ such that $\{\mathbb{H}_k\}_{m \leq k \leq L}$ is a decreasing sequence if $r^{2d} < r_0$ (Figure \ref{fig:radius}C). For any $m \leq k < L$, we show that $\log(\mathbb{H}_k/ \mathbb{H}_{k+1})$ is non-negative as follows

\begin{align*}
		\log \mathbb{H}_k - \log  \mathbb{H}_{k+1} &= \Theta_1 + \frac{ (k-1) \log ( \log (y/\sqrt{k})) - \sum_{s=m}^{k-1} \log ( \log (y/\sqrt{s})) }{3k(k-1)} \\
  &\geq \Theta_1 + \frac{ \bigg[ \log \bigg( \dfrac{\log (y/\sqrt{k})}{\log (y/\sqrt{m})} \bigg) + (m-1)\log \log (y/\sqrt{k})\bigg]}{3k(k-1)} 
\end{align*} where $\Theta_1 = \frac{\log \Theta_m^3 + \log [ k(k/k+1)^{k-1}]}{9k(k-1)}$ and $y=1/r^{2d}$.
Since, for any $\alpha, \beta > 0$, $\displaystyle \lim_{y \to \infty}  \dfrac{\log (y/\alpha)}{\log (y/\beta)} = 1$, there exist $r_0$ such that $\mathbb{H}_k \geq \mathbb{H}_{k+1}$ for any $m \leq k \leq L$ if $y > \frac{1}{r_0}$, i.e., $\mathbb{H}^* = \max_{m \leq k \leq L} \mathbb{H}_k = \mathbb{H}_m$.  

Consequently, for each $\ell \ge k$
\begin{align*}
  \dfrac{Q_\ell^{3d}}{Q_\ell^{2d}} = \frac{\pi^{3d}(\mathbf{x}_\ell)/\pi^{3d}(\mathbf{x}_0)}{\pi^{2d}(\mathbf{x}_\ell)/\pi^{2d}(\mathbf{x}_0)} \geq 1 \text{ if and only if } \mathbb H \ge \mathbb H_{k}.
\end{align*}
In the simple state space \eqref{eq:state space}, if $\mathbb H=\mathbb H_L$, then $\pi^{3d}(\x_L)/\pi^{3d}(\xn)$, the probability of forming the largest droplet in 3d relatively to the probability of no droplets, is bigger than or equal to  $\pi^{2d}(\x_L)/\pi^{2d}(\xn)$. However for $\pi^{3d}(\x_L)/\pi^{3d}(\xn)\ge \pi^{2d}(\x_L)/\pi^{2d}(\xn)$, a bigger reduction of the hydrodynamic radius in 2d is needed as it holds only if $\mathbb H\ge \mathbb H_m> \mathbb H_L$.
\end{remark}

\section{Conclusion}

We used a reaction network and the associated Markov chains to study how spatial dimension affects LLPS. We set the rate constants using the concepts of mean first passage times and the generalized Smoluchowski reaction kinetics. These rate constants capture spatial dimensional effects, and they further reflect the physical influence of temperature on protein interaction range and viscosity in hydrodynamics. Using chemical reaction network theory, we obtained a closed form of the stationary distribution and revealed qualitative differences between 2d-LLPS and 3d-LLPS using this closed form.

Our model successfully reproduces the phase diagram of LLPS as predicted by free energy. Building on this validation, we performed an analytical and numerical investigation into viscosity in both 2d and 3d. This investigation shows that 2d-LLPS can form droplets even at lower viscosities compared to 3d-LLPS. Further, there exists a range of the threshold number of proteins required for droplets formation in which 2d-LLPS has a much higher probability of forming droplets than 3d-LLPS. This may provide a reason why cells utilize 2d spaces such as ER membranes for LLPS. Finally, considering the effect of the hydrodynamic radius of proteins, our paper identifies the ratio of the radii between 2d and 3d for which 3d-LLPS can have a similar number of droplets compared to 2d systems, and this result is supported by an analytical proof.

The Markov model we proposed is based on the first passage times of diffusing particles. While we primarily analyzed the stationary distribution of the model, there are many avenues for future work analyzing other aspects of the model. For instance, one can use chemical reaction network theory and present the Markov process using the random time representation \cite{Kurtz72} to study the diffusion limit and the fluid limit of the model under the volume scaling and time scaling in future studies.
 Furthermore, the random-time representation and the Gillespie algorithm can be also employed to explore the transient dynamic of LLPS such as quasi-stationary behaviors and the pre-equilibrium behaviors. As such, our Markovian chemical reaction network theory of LLPS offers a new framework 
 for studying a variety of microscopic (or mesoscopic) perspectives on LLPS.

\vspace{1cm}

\appendix

\textbf{\Large{Appendix}}

\section{Modeling details}\label{app:model details}
In this section, we give more details pertaining to modeling LLPS with the stochastically modeled reaction network \eqref{eq:main crn}. 

\subsection{The minimum number of proteins for droplet formation}\label{sub_app:threshold}
The minimal number of proteins, $m$, for droplet formation is experimentally and theoretically validated in \cite{martin2021multi, qin2022nucleation_barrier,iida2024nucleation_barrier}. In \cite{martin2021multi} the authors used the condition of zero flux to derive the critical number of proteins to form the nucleation barrier.
We can also validate the existence of the minimal number with our Markov model and the volume fraction \eqref{eq:vol ratio}. As shown in Figure \ref{fig:m=3} (right), the volume fraction with $m=3$ (that is, a droplet can be formed with 3 proteins) immediately increases when $P_{tot}$ increases, as opposed to the case of $m=10$ displayed in Figure \ref{fig:m=3} (left), where the plateau of the volume fraction characterizes the existence of the threshold protein concentration for phase separation. This indicates that if an arbitrarily small number (such as $m\le 3$) of proteins can form droplets, then there is no threshold concentration of the protein to form droplets. Hence the condition of $m\le 3$ fails to capture the key feature of LLPS.

\begin{figure}[ht]
    \centering
\includegraphics[width=0.9\textwidth]{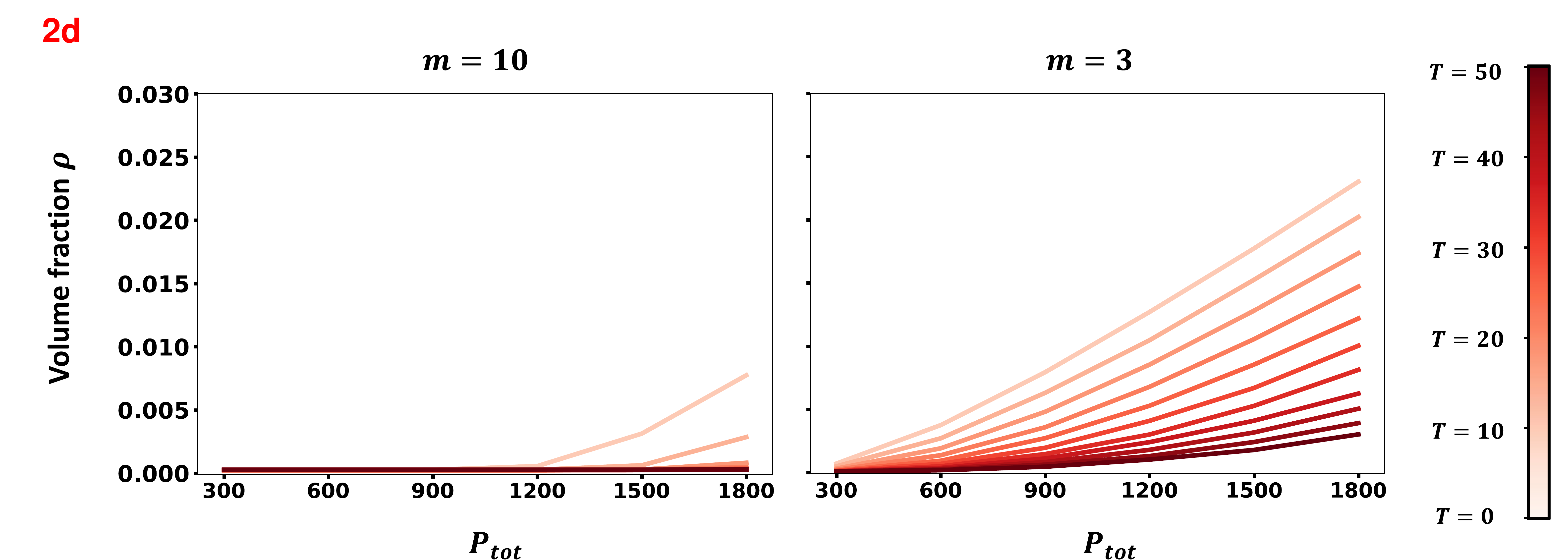}
    \caption{ 
\textbf{Validate the necessity of $m>3$ using the volume fraction.} Volume fractions \eqref{eq:vol ratio} as a function of $P_{tot}$ for $m=10$ and $m=3$. Only the case of $m=10$ displays a plateau on the range of small values $P_{tot}$ that characterizes the threshold protein concentration for droplet formation.}
\label{fig:m=3}
\end{figure}

\subsection{Ostwald ripening}\label{app:ostwald}
Opposed to droplet formation, when proteins leave a droplet, the balance of the fluxes can be collapsed. This leads dilution of the dense state modeled by $D_m\to mP$.  This reaction is also modeling Ostwald ripening, which is another mechanism to grow the size of droplets rather than coalescence and fusion. According to Ostwald ripening, small droplets disappear by the reaction $D_m\to mP$ since higher flux (or higher Laplace pressure) of small droplets. In other words, the molecules on the surface of a smaller droplet are energetically less stable compared to those in larger droplets. Hence the inside proteins leave the droplet \cite{naz2024self}. These proteins diffuse and merge into large droplets as $P+D_k\to D_{k+1}$ for some $k$.

\subsection{Protein assembly and LLPS}\label{app:assembly}
Protein assembly and phase separation are distinct concepts \cite{banani2017biomolecular}. Proteins can not only form droplets but also assemble with other proteins \cite{banani2017biomolecular, bartolucci2024interplay}. Multiple monomers can assemble to create a multimer. 
    A thermodynamical point of view revealed the size distribution of protein assemblies can change the configuration of LLPS \cite{bartolucci2024interplay}. 
    
    Therefore, it is meaningful to model such assembly processes with reactions $P_i+P_j\rightleftharpoons P_{i+j}$, where $P_i$ represents $i$-mers. However, to build a more coarser-grained model, we assume that the assembly process is on a much slower time scale compared to time scale of phase separation. This setting is also used in \cite{bartolucci2024interplay}. Another possible scenario is that the assembly equilibrium is already made, so that the process of the protein assembly is less dynamical than phase separation. Hence we can average out the effect of the size distribution of the assemblies. In those scenarios, we assume that $P$ represents the number of total proteins including both the monomers and multimers. 

    \subsection{Mass-action kinetics under well-mixed compartments}
    Phase separation obviously makes the space demixed. However, each compartment can be well-mixed. That is, the space on a diluted phase is well-mixed and the inner space of droplets are also well-mixed. This condition is essential for the reactions to take place in either diluted spaces or dense spaces. Under this condition of well-mixed compartments, it is reasonable to use mass-action kinetics for the reactions in \eqref{eq:main crn}.

\subsection{Mobilities of droplets and the proteins in the dense phase}\label{sub_app:mobility}
As droplet mass increases, the diffusion coefficient of proteins inside the droplets decreases \cite{kamagata2022guest, snead2022membrane}, which causes small mobility of droplets. Additionally, an existing study provided a more precise comparison between proteins inside and outside droplets \cite{kim2023intrinsically}. The authors  
 experimentally found that the molecular rearrangement rate of membrane-bound proteins is slower within droplets compared to the same proteins outside the droplets \cite{kim2023intrinsically}. This motivated us to assume that droplet fusion and fission events occur at a much slower rate than droplet formation, coarsening, and dissociation events. Thus, we ignore fusion and fission in our model.
Note that we incorporate the disparity of protein mobilities into the reaction rate constants with the constant $\mathcal V$ in the rates $b_k$'s.

\subsection{Multicomponent LLPS}
LLPS often takes place with multiple proteins as scaffold proteins drive phase separation and clients are engaged into it \cite{andre2020liquid}. We simply consider a single type of scaffold proteins in this work for the sake of simplicity.

\subsection{Temperature effects on the volume fraction and the phase diagram}
The temperature effect in the phase diagram can be explained as follows. 
Due to a low viscosity with high temperature, a higher number of proteins is needed to initiate forming droplets and maintain them. However, a longer hydrodynamic radius with high temperature makes the volume of the droplets bigger, so that the ratio of the droplet volume can easily be big with high temperature

\section{Parameters in all figures}\label{app:parameters}

We provide the values of the parameters we used in Table \ref{table}.

The temperature $T$ is measured in Celsius.
The following functions are used to generate the all figures such that 
\begin{align*}
	\mathcal{V} &= \mathcal{V}_0 \cdot e^{\frac{8500}{T+273}}\\
	r &= r_0 \cdot (T+273) 
\end{align*} In general, we use a diffusion coefficient for both 2d and 3d, $\mathcal{D}^{2d} = \mathcal{D}^{3d} = 1$, and set the system size to $R = 1$ for all figures. However, for Figure \ref{fig:free energy and time trajectory}C, we used more realistic parameters $\mathcal{D}^{2d} = 10\mu m^2 s^{-1}$ as found in \cite{snead2022membrane}, a system size of $R = 10^3 \mu m$ and $r = 5\mu m$ at $T = 36$ based on an existing study. For Figure 8, time trajectories are sampled using the same algorithm and initial state described for 2d in Figure \ref{fig:phase diagram} with the sampling process is terminated after $10^4$ reactions for both $m=3$ and $m=10$.

\begin{table*}[h]
    \centering
    \resizebox{\textwidth}{!}{
    \begin{tabular}{|l| l|c|c|c|c|c|c|c|c|}
        \hline
        \makecell{\textbf{Para-} \\ \textbf{meters}} & \textbf{Definition} & \multicolumn{8}{c|}{\textbf{Figures}} \\ \cline{3-10} & & \textbf{\ref{fig:overall}} & \textbf{\ref{fig:free energy and time trajectory}} & \textbf{\ref{fig:phase diagram}} & \textbf{\ref{fig:viscosity}D} & \textbf{\ref{fig:threshold}} & \textbf{\ref{fig:radius}} & \textbf{\ref{fig:m=3}} & \textbf{\ref{fig:gillespie stability}} \\ \hline
	$m$ & Threshold & \multicolumn{4}{c|}{10}& [5,35] & 10 &3, 10 & 10\\ \hline
        $L$ & \makecell[l]{The number of\\ proteins\\ in the largest \\droplets}& \multicolumn{4}{c|}{19} & 36 & 19 & 19 & 19\\ \hline
        $\mathcal{V}_0$ & \makecell[l]{Viscosity \\ scaling constant} & $e^{-25}$ &$e^{-19}$ &$e^{-25}$
          & \multicolumn{2}{c|}{$e^{-6}$}& $e^{-12}$ & \multicolumn{2}{c|}{$e^{-25}$} \\ \hline
        $T$ & Temperature & \multicolumn{2}{c|}{$36$} & $[0,50]$ & $36$  & $10,36,60$ & $36$ & \multicolumn{2}{c|}{$[0,50]$} \\ \hline
        $\frac{r_0}{R}$ & \makecell[l]{Hydrodynamic\\ radius / system size} & \multicolumn{1}{c|}{$\frac{5\cdot 10^{-3}}{309}$} & - & \multicolumn{6}{c|}{$\frac{5\cdot 10^{-3}}{309}$}   \\ \hline
        $P_{tot}$ & Total proteins  & - & $5 \cdot 10^{4}$ & \makecell[l]{   2d) $[5\cdot 10^2,8\cdot 10^3]$ \\ 
          \& $[1.5\cdot10^4,3\cdot10^4]$ \\   3d) $[10^5,2.3\cdot 10^5]$\\
          \& $[45\cdot10^5,85\cdot10^5]$} & $200$ & $100$ & $19$ & \makecell[l]{  2d) \\
        $[3\cdot 10^2,18\cdot 10^2]$}  & \makecell[l]{  2d) $8\cdot 10^3,1.5\cdot 10^4,$ \\
          \& $3\cdot 10^4 $ \\ 
          3d) $23\cdot 10^5,45\cdot 10^5,$\\   \& $65\cdot 10^5$} \\ \hline
 \end{tabular} }
 \caption{Parameter values for all figures}\label{table}
\end{table*}

\section{Derivation of stationary distributions}\label{app:def 0}

In the literature of chemical reaction network theory, people use structural properties of chemical reaction networks to derive dynamical features of the associated dynamical systems for the chemical reaction networks. 
The following theorem (Theorem 4.2 in \cite{AndProdForm}) shows that a certain structural property can imply a closed form of the stationary distribution of the associated Markov chain. 
\begin{thm}\label{thm:def 0}
    Let $X$ be the associated continuous-time Markov chain for a chemical reaction network whose connected components a strongly connected. Let $n$ and $\ell$ denote the numbers of the nodes and the connected components of the chemical reaction network, respectively. Furthermore let $s$ be the dimension of the vector space $span\{y'_k-y_k: y_k\to y_k\}$. For $X(0)=\x_0$, if
    $n-\ell-s=0$,
    then $X$ admits a unique stationary distribution $\pi$ such that for each state $x$,
\begin{align}\label{eq:stationary dist}
       \lim_{t\to \infty}P(X(t)=\x)= \pi(\x)=M \prod_{i=1}^d \frac{c_i^{\x_i}}{\x_i!},
    \end{align}
    where $c$ is any positive steady state of a system of ordinary system given by \begin{align*}
   \frac{d}{dt}x(t)=\sum_{y\to y'}\prod_{i=1}^d(x_i(t))^{y_i} (y'-y),
    \end{align*} where $M$ is the normalizing constant $M=\left( \sum_{\x\in \mathbb S_{\x_0}}\prod_{i=1}^d \frac{c_i^{\x_i}}{\x_i!} \right )^{-1}$ and  $\mathbb S_{\x_0}$ is the closed communication class containing $\x_0$. 
\end{thm}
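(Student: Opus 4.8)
The plan is to prove Theorem~\ref{thm:def 0} in two stages: reduce stationarity of the claimed product measure to a purely deterministic, network-structural condition (\emph{complex balance}), and then invoke the classical deficiency zero theory to certify that condition. Throughout, write $x^y=\prod_i x_i^{y_i}$, let $\kappa_{y\to y'}>0$ be the mass-action rate constants, and recall that the stochastic mass-action intensity is $\lambda_{y\to y'}(\x)=\kappa_{y\to y'}\prod_i \x_i!/(\x_i-y_i)!$, understood to vanish unless $\x\ge y$ componentwise.

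\emph{Step 1: existence of a complex-balanced equilibrium.} Since each connected component of the reaction graph is strongly connected, the network is weakly reversible, and the hypothesis $n-\ell-s=0$ says its deficiency is zero. The deficiency zero theorem (Feinberg, Horn, Jackson) then guarantees that for every positive choice of rate constants the mass-action system $\dot x=\sum_{y\to y'}\kappa_{y\to y'}x^y(y'-y)$ has a strictly positive equilibrium $c$, and moreover that every positive equilibrium is \emph{complex balanced}: for each complex (node) $\eta$,
\[
\sum_{y':\,\eta\to y'}\kappa_{\eta\to y'}\,c^{\eta}\;=\;\sum_{y:\,y\to\eta}\kappa_{y\to\eta}\,c^{y}.
\]
I would cite this result rather than reprove it; its proof is where the deficiency hypothesis is genuinely consumed, via a linear-algebraic count linking $\delta=0$ to the kernels of the incidence and stoichiometric maps.

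\emph{Step 2: lifting to the chain.} Set $g(\x)=\prod_i c_i^{\x_i}/\x_i!$, so the claimed stationary law is $\pi=Mg$. Splitting $c_i^{\x_i}=c_i^{y_i}c_i^{\x_i-y_i}$ and cancelling factorials gives the identity
\[
g(\x)\,\lambda_{y\to y'}(\x)=\kappa_{y\to y'}\,c^{y}\,g(\x-y),
\]
valid for all $\x$ since both sides vanish unless $\x\ge y$. Feeding this into the stationary master equation $\sum_{y\to y'}\big[\pi(\x-(y'-y))\lambda_{y\to y'}(\x-(y'-y))-\pi(\x)\lambda_{y\to y'}(\x)\big]=0$ and using $\x-(y'-y)-y=\x-y'$ collapses it to $\sum_{y\to y'}\kappa_{y\to y'}c^{y}\big[g(\x-y')-g(\x-y)\big]=0$. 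Regrouping by complexes — each reaction $y\to y'$ contributing $+c^{y}g(\x-\eta)$ when $\eta=y'$ and $-c^{y}g(\x-\eta)$ when $\eta=y$ — rewrites the left side as $\sum_{\eta}g(\x-\eta)\big[\sum_{y\to\eta}\kappa_{y\to\eta}c^{y}-\sum_{\eta\to y'}\kappa_{\eta\to y'}c^{\eta}\big]$, and every bracket vanishes by the complex-balance relations of Step 1. Hence $\pi$ solves the stationary equation.

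\emph{Step 3: uniqueness, normalizability, convergence, and the main obstacle.} Weak reversibility together with the fixed initial state makes $X$ irreducible on the closed communication class $\mathbb S_{\x_0}$, so any stationary distribution supported there is unique; normalizability is immediate from $\sum_{\x\in\mathbb S_{\x_0}}g(\x)\le\prod_i\sum_{n\ge0}c_i^{n}/n!=e^{\sum_i c_i}<\infty$ (and $\mathbb S_{\x_0}$ is finite in the LLPS setting by conservation of total protein number). Irreducibility plus positive recurrence then upgrades stationarity to $p(\x,t)\to\pi(\x)$ for each $\x$. The hard part is entirely Step 1: existence of a positive complex-balanced equilibrium is the deep structural input, whereas Steps 2--3 are routine bookkeeping; since the statement is quoted from \cite{AndProdForm}, I would present Steps 2--3 in full and defer Step 1 to that reference and the underlying deficiency zero theorem.
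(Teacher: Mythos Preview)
The paper does not prove Theorem~\ref{thm:def 0} at all; it is stated in Appendix~\ref{app:def 0} as a quotation of Theorem~4.2 from \cite{AndProdForm} and used as a black box to obtain \eqref{eq:pi}. Your proof sketch is correct and in fact reproduces the argument of \cite{AndProdForm} itself: the factorial identity $g(\x)\lambda_{y\to y'}(\x)=\kappa_{y\to y'}c^{y}g(\x-y)$ reducing the stationary master equation to complex balance, followed by the Feinberg--Horn--Jackson deficiency zero theorem to supply the complex-balanced equilibrium, is exactly the route taken there. So there is nothing to compare against in this paper, but your outline matches the proof in the cited source.
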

We need to clarify some terminologies in Theorem \ref{thm:def 0}. We first define connected components as the typical concept in graph theory, regarding the chemical reaction network as a graph. A connected component is strongly connected if there exists a path from node $v$ to node $u$ in the component then there is a path from node $u$ to $v$ in the connected component. For example, in the following reaction network,
\begin{align*}
&A\xrightarrow{ \ \ \ \ \ } B \hspace{.7in} 2C\xrightarrow{ \ \ \ \ \ } D\\[-1ex]
&  \displaystyle \nwarrow \hspace{.2in}  \swarrow  \hspace{2.2cm}  \displaystyle \searrow \hspace{.2in}  \swarrow \\[-1ex]
&\hspace{0.5cm} C \hspace{1.2in} 2B
\end{align*}
the first connected component is strongly connected but the second one is not.

\begin{remark}
   There are $n=2L-2m+2$ nodes in \eqref{eq:main crn} and $\ell=L-m+1$ connected components each of which is strongly connected. Also, the reaction vectors are
    \begin{align*}
        \pm (-m,1,0,\dots,0), \pm (-1,-1,1,\dots,0),\dots, \pm (-1,0,\dots,-1,1),
    \end{align*}
    Hence the dimension of the vector space spanned by these reaction vectors is $s=L-m+1$.  Hence  $n-\ell-s=0$. This implies that the closed form \eqref{eq:pi} of the stationary distribution of the Markov model associated with \eqref{eq:main crn}.
    \end{remark}

\section{Approximate \texorpdfstring{$\pi$} \ \  with sample trajectories}\label{eq:initial_state}
The volume fractions and phase diagrams in Figure \ref{fig:phase diagram} are estimated using the time trajectories sampled with  Gillespie's algorithm (2d) and the tau-leaping method (3d) \cite{cao2005tau_leaping}. In the simulation, the initial state is defined as
 \begin{align*}
    X(0) = \begin{cases}
        (0,P_{tot}/m,0,\dots,0)  \quad \quad &\text{if $d=2$}\\
        (P_{tot}/2,P_{tot}/2m,0,\dots,0)  \quad \quad &\text{if $d=3$}\\
    \end{cases} 
\end{align*}
where the sampling process was terminated after $3 \cdot 10^4$ reactions in 2d simulations. For the 3d case, to reduce computational costs, we used the tau-leaping method and terminated the sampling at $3.5 \cdot 10^4$th reactions.
Figure \ref{fig:gillespie stability} showed that the samples closely approximate the volume fraction \eqref{eq:vol ratio}, as the number of proteins in the dilute phase $P(t)$ stabilizes when the specified number of reactions is fired.

\begin{figure}[ht]
    \centering
\includegraphics[width=1.05\textwidth]{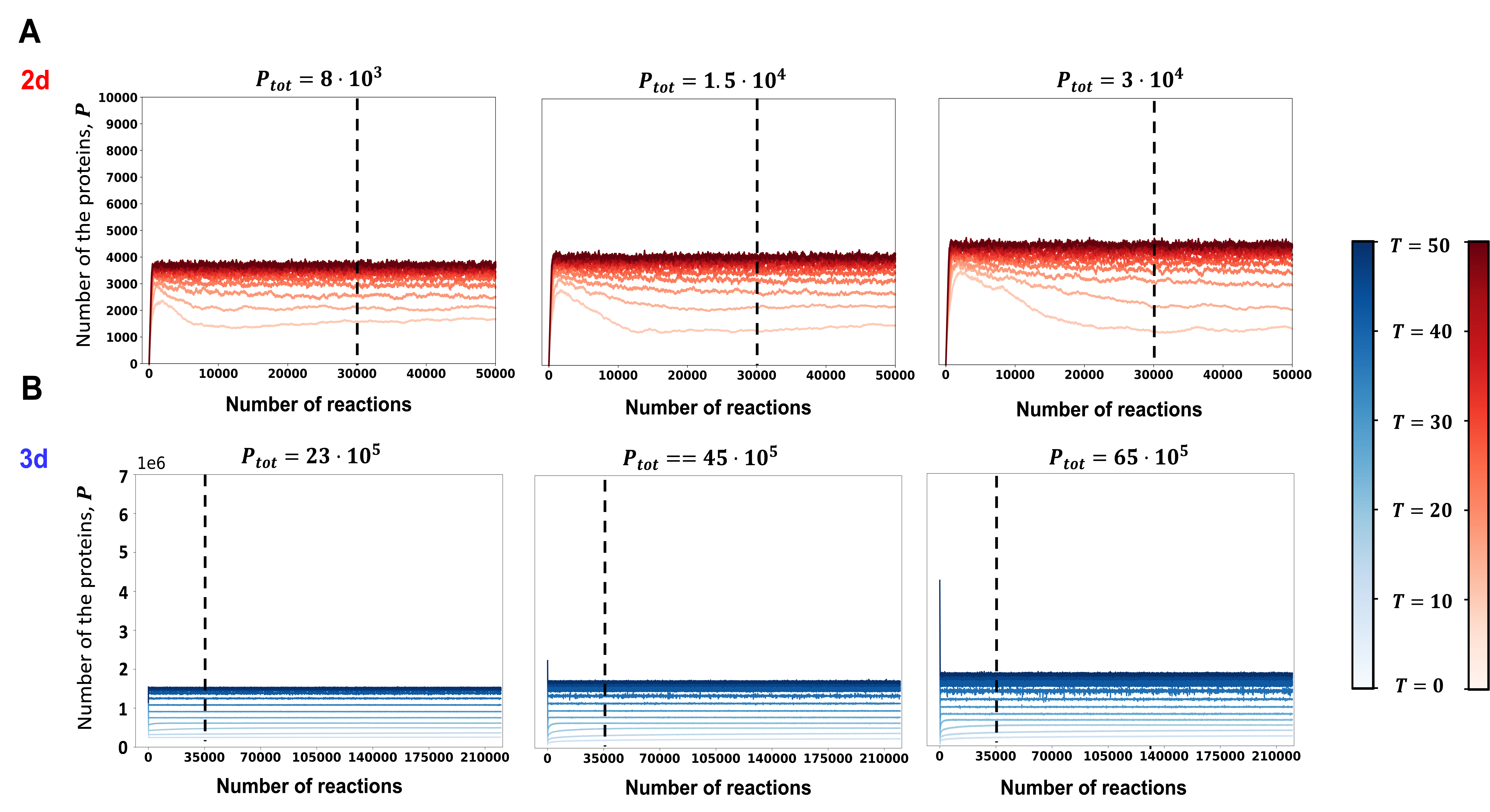}
    \caption{ 
\textbf{Convergence of trajectories.} The time trajectories with the different total numbers of proteins $P_{tot}$ and the different choices of the temperatures for 2d (top) and 3d (bottom). }
\label{fig:gillespie stability}
\end{figure}

\section*{Acknowledgements}
The first and third authors were supported by the National Research Foundation of Korea (NRF) grant funded by the Korea government (MSIT) (No. 2022R1C1C1008491 and No. RS-2023-00219980). The third author was also supported by NRF grant funded by MSIT (No.2021R1A6A1A10042944), POSCO HOLDINGS research fund (2022Q019), and Samsung Electronics Co., Ltd (IO230407-05812-01).
The second author was supported by the National Science Foundation (Grant Nos. DMS-2325258 and CAREER DMS-1944574).

\end{document}